\newtheorem{theorem}{Theorem}[section]
\newtheorem{lemma}[theorem]{Lemma}
\newtheorem{corollary}[theorem]{Corollary}
\newcommand{\R}{\mathbb{R}}
\newcommand{\tr}{\operatorname{Tr}}
\numberwithin{equation}{section}
\def \endprf{\hfill {\vrule height6pt width6pt depth0pt}\medskip}
\newenvironment{proof}{\noindent {\bf Proof} }{\endprf\par}
\newcommand{\cS}{\mathbb{U}_n^4}
\newcommand{\St}{\mathbb{V}_{2}(\mathbb{C}^n)}
\begin{document}

\newcommand{\cA}{\mathcal{A}}
\newcommand{\xl}{x_{l}}
\newcommand{\xj}{x_{j}}
\newcommand{\bxl}{\bar{x}_{l}}
\newcommand{\bxj}{\bar{x}_{j}}
\newcommand{\zkl}{z_{l}}
\newcommand{\zkj}{z_{j}}
\newcommand{\bzkl}{\bar{z}_{l}}
\newcommand{\bzkj}{\bar{z}_{j}}
\newcommand{\zkabr}{z_{a}\bar{z}_{b}|z_{r}|^2}
\newcommand{\dlarm}{\delta_{(l-1-(a-r)=0,n)}}
\newcommand{\dlarp}{\delta_{(l-1+(a-r)=0,n)}}
\newcommand{\djbrm}{\delta_{(j-1-(b-r)=0,n)}}
\newcommand{\djbrp}{\delta_{(j-1+(b-r)=0,n)}}
\newcommand{\dbr}{\delta_{(b-r=0)}}
\newcommand{\dar}{\delta_{(a-r=0)}}

\newcommand{\dx}{\frac{\partial}{\partial x_i}}
\newcommand{\dy}{\frac{\partial}{\partial y_i}}
\newcommand{\dv}{\frac{\partial}{\partial v_i}}
\newcommand{\dw}{\frac{\partial}{\partial w_i}}

\newcommand{\cB}{\mathcal{B}}
\newcommand{\cU}{\mathbb{U}_n}
\newcommand{\cSU}{\mathbb{S}\mathbb{U}}

\newcommand{\Comp}{\mathbb{C}}

\title{Determination of all pure quantum states from a minimal number of observables.}

\author{ Damien Mondragon$\dagger$ and Vladislav Voroninski$\ddagger$ \\
  \vspace{-.1cm}\\
  $\dagger$ Department of Mathematics, University of California, Berkeley, CA 94720\\
  $\ddagger$ Department of Mathematics, Massachusetts Institute of Technology, MA 02139
}

\maketitle
\begin{abstract}

We show that for any positive integer $n$, the maps $x \in \mathbb{C}^n \mapsto \{\left|\langle x, z_i \rangle \right|^2\}_{i=1}^{4n} \in \mathbb{R}^{4n}$, where $z_i$ are the columns of four $n\times n$ unitary matrices, are generically injective modulo multiplication by a global phase factor, yielding a family of embeddings of $\mathbb{C}P^{n-1}$ into $\mathbb{R}^{4n-4}$. In particular, this implies that distribution measurements about a pure state with four generic full-rank observables are informationally complete, which is sharp for $n \geq 6$. To complement this information-theoretic study, we establish in a companion paper that the PhaseLift algorithm yields efficient phase retrieval from quadratic measurements with $O(1)$ unitary matrices, with high probability, where the unitaries are iid according to Haar measure.
\end{abstract}

{\bf Keywords.} Pauli problem, informationally complete measurements, phase retrieval, real algebraic geometry, Nash stratification, Wright's conjecture, PhaseLift.

%
%
%

\section{Introduction}
In 1933 Wolfgang Pauli posed what is now known as the "Pauli Problem" \cite{Pauli}: does perfect knowledge of the distributions obtained from the momentum and position observables uniquely determine any pure quantum state? It was answered in the negative \cite{PauliNegative}, and much research focused on characterizing ways in which the statement failed. Ron Wright conjectured in 1978 that there exist three observables which uniquely determine any pure state \cite{Vogt78}. An erratum of a paper by B.Z. Moroz in 1983 \cite{Moroz1,Moroz2} acknowledges M. Gromov for pointing out an argument that at least four observables are required in high enough dimensions, via general geometric obstructions to embedding $\mathbb{C}P^{n}$ into Euclidean space. In 1994, Moroz and Perelomov exposited this argument \cite{OPPP} and similar arguments were independently rediscovered more recently in \cite{QTPI}, being further elaborated upon in connection to Wright's conjecture in \cite{SFM,SavingPhase}. 


In this paper we settle the line of inquiry regarding the minimal number of informationally complete observables for dimensions $n \geq 6$: we show that distribution measurements with 4 full-rank observables are generically sufficient to determine any pure state $x \in \mathbb{C}P^{n-1}$. By the results of \cite{QTPI,SFM}, at least 4 observables are necessary for informational completeness when $n \geq 6$, thus our result is sharp in those dimensions. In particular, our work exhibits a family of embeddings of $\mathbb{C}P^{n-1}$ into $\mathbb{R}^{4(n-1)}$. From the perspective of phase retrieval, each element of this family yields $4n-3$ measurements vectors which are injective modulo multiplication by a global phase factor.

Informational completeness of observables on its own doesn't imply the existence of efficient state recovery algorithms. To complement our information theoretic study, we show in a companion paper \cite{QTFFRO} that efficient and exact phase retrieval is achievable in the same setting, with distribution measurements from a constant number of full-rank observables. Specifically, we show that the recently proposed PhaseLift algorithm \cite{CSV}, which consists of solving a simple semidefinite program, exactly recovers a fixed pure state from quadric measurements with vectors from $O(1)$ unitary matrices with high probability, provided the random matrices are iid according to Haar measure on the unitary group. Thus, the jump from the information theoretic recovery limit to guaranteed efficient exact phase retrieval, is a constant oversampling factor. 

\subsection{Notation and conventions}
We work in the standard finite dimensional setting of quantum mechanics. Each state $x \in \mathbb{C}^n$ is unit norm and defined up to a global phase factor: $x \sim e^{i\theta} x$ for any $\theta \in \mathbb{R}$. Let $\mathbb{H}_{n}$ be the set of Hermitian operators on $\mathbb{C}^n$ and denote the $n$-dimensional unitary group as $\mathbb{U}_n$.  An observable $A$ is then an element of $\mathbb{H}_{n }$, with eigenvalues $\lambda_i$ and associated eigenspaces $E_i$. Taking a measurement of a state $x$ with an observable yields $\lambda_i$ with probability $\| \mathcal{P}_{E_i}(x)\|_2^2$, where $\mathcal{P}_{E_i}$ is the projection on eigenspace $E_i$. Generically, an observable $A$ has $n$ distinct real eigenvalues $\lambda_1 > \lambda_2 \ldots > \lambda_n$ and in this case, the measurements are $\lambda_i$ with probability $\| u_i u_i^*(x)\|_2^2 = \left| \left< u_i,x\right>\right|^2$, where $u_1, \ldots u_n$ are an associated set of orthonormal eigenvectors of $A$. From now on, we say an observable $A \in \mathbb{H}_n$ is admissible, if it has $n$ distinct eigenvalues. We say that an element $U \in \mathbb{U}_n$ diagonalizes an observable $A$, if the columns of $U$ consist of eigenvectors of $A$ and $A = UDU^*$ where $D$ is a diagonal matrix such that $D_{ii} = \lambda_i$. 

We shall refer to a set of observables $\{A_{i}\}_{i=1}^m$ as informationally complete, if knowledge of the probabilities of observing their individual eigenvalues by measuring a particular pure state $x \in \mathbb{C}P^{n-1}$, uniquely determines any such pure state. That is, a set of observables $\{A_{i}\}_{i=1}^m$ is informationally complete if the map $x \in \mathbb{C}P^{n-1} \mapsto \left\{ \left|\left< u_j^{(i)}, x/\|x\|_2 \right> \right|^2 \right\}_{1\leq j \leq n, 1\leq i \leq m }$ is injective, where $u_{j}^{(i)}$ is the $j$'th column of $U_i$, and $A_i = U_i D U_i^*$ is a diagonalization of $A_i$. A slightly different convention is used in the field of phase retrieval, in which a set of measurement vectors $z_i \in \mathbb{C}, i = 1,2,\ldots,n $ is said to be injective modulo phase if quadratic measurements $\left|\left<z_i,x\right>\right|^2, i = 1,2,\ldots m$ determine any $x \in \mathbb{C}^n$ modulo multiplication by a complex number of unit norm, that is, as an element of $\mathbb{C}^{n}/\mathbb{S}^1$. We will say here that a matrix $A \in \mathbb{C}^{m \times n}$ is injective modulo phase if measurements $ \left\{| e_j^* Ax|^2 \right\}_{j=1}^m \equiv |Ax|^2 \in\mathbb{R}^m$, where $\{e_j\}_{j=1}^m$ is the standard basis for $\mathbb{R}^m$, determine any $x$ as an element of $\mathbb{C}^{n}/\mathbb{S}^1$. Note that in the standard convention of phase retrieval, measurement vectors corresponding to $A$ would be $a_i^*$ where $a_i, i = 1,2,\ldots m$ are the rows of $A$.  

Since observables are generically admissible, and the informational completeness of a set of admissible observables depends only on their eigenvectors, we from now on identify two admissible observables if their eigenspaces are the same, upon ordering by eigenvalue. Thus, an admissible observable $A \in \mathbb{H}_{n }$ with eigenvalues $\lambda_1 > \lambda_2,\ldots, >\lambda_n$ is identified with the equivalence class $\{ U \in \mathbb{U}_n; A = U^* D U, D_{ii} = \lambda_i \}$, where note that the eigenvectors of $A$ are conjugates of rows of $U$. Given a set of eigenvectors $u_1,\ldots u_n$ for $A$, we can explicitly represent this equivalence class as the set of matrices in $\mathbb{U}_n$, with $j$'th row equal to $e^{i \theta_j} u_j^*$, for some $\theta_j \in \mathbb{R}$.

\subsection{Connections to prior work}
There has been some recent progress in the study of phase retrieval, which bears on questions in quantum information completeness. Balan, Bodmann, Casazza and Edidin showed in \cite{BCE06,BCE07,BBC07,BBC09} that 4n-2 generic vectors in $\mathbb{C}^n$ are injective modulo phase, Hammen and Bodmann gave an example of $4n-4$ specific vectors with the same property \cite{Bodmann} and Hering, Vinzant, Conca and Edidin proved that 4n-4 generic vectors in $\mathbb{C}^n$ are injective modulo phase \cite{CEHV} .   

The basic strategy to establishing claims of the papers referenced in the previous paragraph is to express the set of measurement vectors which do not satisfy the injectivity property as a semi-algebraic set, and establish that this set has smaller algebraic dimension than the set of all measurement vectors. We employ a similar strategy in that we express the set of 4-tuples of observables which do not determine every pure state as a real algebraic variety $\mathcal{F} \subseteq \mathbb{U}_n^4$ and aim to show that it has smaller algebraic dimension than $\dim(\mathbb{U}_n^4)$. However, note that a rank-$n$ observable corresponds to a collection of $n$ rank-1 observables $\iff$ the n rank-1 observables form an orthonormal set. Thus, the analogy stops there, as previous algebro-geometric approaches to measurement injectivity relied on evident algebraic independence between the defining equations of analogously defined varieties \cite{BCE06,BCE07,BBC07,BBC09}. In contrast, in our unitary setting, the main difficulty is that there are non-trivial algebraic relations between the polynomials defining the unitary group and those that define $\mathcal{F}$ - thus simple dimension counts do not apply. 

The context of the aforementioned papers is therefore very different from ours and to the best of our knowledge, the main result in this paper is the first of its kind. Indeed, the existence of 4 unitary bases that are informationally complete, which follows from our main theorem, was listed as an open problem in the frame theory community \cite{RPRBP}, and was known to imply that phase retrieval from magnitudes of projections on subspaces of arbitrary dimensions is possible, a problem that occurs in crystal twinning during the process of X-ray crystallography \cite{RPRBP}.



On the algorithmic side, the authors of \cite{CSV} proposed an efficient algorithm for phase retrieval called PhaseLift, and proved that PhaseLift recovers pure states from quadratic measurements with $O(n \log n)$ iid gaussian vectors. Candes and Li later improved this result in \cite{Five} to requiring only $O(n)$ iid gaussian vectors. In comparison, the algorithmic result in our companion paper \cite{QTFFRO} requires a more technical approach, since there we work with full-rank observables, each yielding $n$ real numbers as measurements, which enforces unitary structure and thus a lack of independence between individual measurements.

\subsection{Outline of proof strategy}
We start by identifying 4-tuples of admissible observables with the Lie group of 4-tuples of unitary matrices $\mathbb{U}_{n}^4$ (thereby neglecting the specific values of the eigenvalues of each observable). We note that the set of 4-tuples of observables which do not determine every pure state is a real algebraic variety $\mathcal{F} \subseteq \mathbb{U}_n^4$ and our strategy is to show that it has smaller algebraic dimension than $\dim(\mathbb{U}_n^4)$. By using the invariance of $\mathcal{F}$ under coordinate-wise right-multiplication by an element of the unitary group $\mathbb{U}_n$, it is enough to show that $\mathcal{F}/\mathbb{U}_n$ has measure zero in $\mathbb{U}_n^4/\mathbb{U}_n \cong \mathbb{U}_n^3$. For a smooth manifold $M \subset \mathbb{U}_n^4$, if there exists a subgroup $G$ of $\mathbb{U}_n$, such that $M$ is invariant under multiplication by $G$ and $\dim(M) -\dim (G) < \dim (\mathbb{U}_n^3)$, then by factoring $M$ through by $G$ on the way to the total quotient, we could express $M/\mathbb{U}_n$ as the image of a smooth map from a manifold of smaller dimension than that of the target manifold, which by Sard's theorem has to be of measure zero. While the variety $\mathcal{F}$ is not a manifold, we seek a decomposition of it into appropriate smooth manifolds to which we can apply this argument. 

Again using the symmetry of $\mathbb{U}_n^4$, we show that a nicer un-twisted variety $\mathcal{F}' \subset \mathcal{F}$ has the same image as $\mathcal{F}$ under the quotient to $\mathbb{U}_n^3$. The Nash stratification theorem states that any real variety of algebraic dimension $d$ can be expressed as a finite union of smooth manifolds of geometric dimension at most $d$. We proceed by adapting the Nash stratification theorem to a group action, which yields decompositions of semialgebraic sets invariant under some Lie group action into smooth manifolds of the appropriate geometric dimension, which are also invariant under that action. To upper-bound the algebraic dimension of intermediary varieties arising in our case, we utilize a particular fiber-bundle structure of $\mathbb{U}_n^4$ and perform algebraic dimension calculations locally in charts by applying resolutions of singularities and working directly with the Zariski tangent spaces. We are thus able to cut up $\mathcal{F}'$ into a finite union of smooth manifolds adapted to an action by a large enough subgroup of the unitary group, which we use to factor through on the way to the total quotient by $\mathbb{U}_n$. Since under that intermediary quotient, the images of each smooth manifold composing $\mathcal{F}'$ have dimension less than $\dim{\mathbb{U}_n^3}$, this yields by Sard's theorem that the image of $\mathcal{F}$ under quotient by $\mathbb{U}_n$ has measure zero, which implies the main result.

\section{Statement of main results}
Having fixed $\mathbb{U}_n$ to be the $n\times n$ unitary group, consider $ \mathbb{U}_n^4 \subseteq \mathbb{C}^{4n\times n}$ as a product Lie group, with a group action given by coordinate-wise right multiplication: $\{U_i\}_{i=1}^4 \cdot U = \{U_i \cdot U\}_{i=1}^4$, for $U \in \mathbb{U}_n$ and $\{U_i\}_{i=1}^4 \in \mathbb{U}_n^4$. For an element $A$ of $\cS$, we denote its orbit by this action as $A \cdot \mathbb{U}_n = \{A\cdot U;  \quad U \in \mathbb{U}_n \}$. Letting $\mathbb{H}_n$ be the set of Hermitian $n \times n$ matrices, we will identify any $A \in \mathbb{C}^{m\times n}$ with a linear map
 \[
 \cA_A: X \in \mathbb{H}_n \mapsto  \{\tr(\bar{z}_i \bar{z}_i^* X)\}_{i=1}^{m} \in\mathbb{R}^m
 \]
  where $z_i$ are the rows of $A$. Note that for $X = xx^*$, and $\{e_i\}_{i=1}^m \in \mathbb{R}^m$ being the standard basis, we have
  \[
  \cA_A(xx^*) = \{\tr(\bar{z}_i \bar{z}_i^* xx^*)\}_{i=1}^{m} = \{ |e_i^*Ax|^2 \}_{i=1}^m \equiv |Ax|^2 \in \mathbb{R}^m
  \]
  

Viewing $A \in \cS$ as an element of $\mathbb{C}^{4n \times n}$, recall that $A$ injective modulo phase if for any $x,y \in \mathbb{C}^n$,
\[
|Ax|^2 = \cA_A(xx^*) = |Ay|^2 = \cA_A(yy^*) \implies xx^* = yy^*
\]

Note that the previous statement is equivalent to $\mathcal{A}_A$ being injective over rank-1 psd matrices. Since for any $A \in \cS \subseteq \mathbb{C}^{4n \times n}$, we have $ \mathcal{A}_A(xx^*) = \left| A x \right|^2 \in \mathbb{R}^{4n}$, $A \in \cS$ is injective modulo phase if and only if every element of the orbit $A \cdot \mathbb{U}_n$ is injective modulo phase. Therefore, the set of elements of $\cS$ that are not injective modulo phase, is invariant under the action of $\mathbb{U}_n$. We can now state the main theorem:



\begin{theorem}
\label{theorem: main}
Consider $\mathbb{U}_n$, for $n\geq 1$, acting on $\cS$ by coordinate-wise right multiplication and let $\pi_1$ be the quotient map of this action. Let $\mathcal{F}$ be the set of elements of $\cS \subseteq \mathbb{C}^{4n \times n}$ that are not injective mod phase. That is, let
\[
\mathcal{F} = \{A \in \cS; \quad \exists  x, y \in \mathbb{C}^n, \quad  xx^* \neq yy^*, \quad \cA_A(xx^*) = \cA_A(yy^*)\}
\]
Then, $\pi_1(\mathcal{F})$ is a set of measure zero in $\pi_1(\cS) = \cS/\mathbb{U}_n \cong \mathbb{U}_{n}^3$, with respect to Haar measure on $\mathbb{U}_{n}^3$. In particular, this implies that almost every quadruple of observables is informationally complete. 
\end{theorem}
To see the last implication, note that since $\mathcal{F}$ is invariant under the action of $\mathbb{U}_n$ and $\pi_1(\mathcal{F})$ has measure zero, we have that $\mathcal{F}$ is a set of measure zero in $\mathbb{U}_n^4$. 
Now, recall that we identified each admissible observable with its set of possible orthonormal eigenbases, sorted by decreasing eigenvalue. For a quadruple of observables, with orthonormal eigenbases given by $U_1,\ldots U_4 \in \mathbb{U}_n$, consider $A = (U_1^*,\ldots, U_4^*) \in \mathbb{U}_n^4\subset \mathbb{C}^{4n\times n}$. Thus, a quadruple of admissible observables, modulo their eigenvalues, is identified with the set of matrices $\{ B \in \mathbb{U}_n^4 \subseteq \mathbb{C}^{4n \times n}; \quad e_j^* B = e^{i \theta_j} e_j^* A, \quad \theta_j \in \mathbb{R}, \quad j = 1,2\ldots 4n \}$, which is also the orbit of $A$ under the action $g_\theta$ of multiplication of each row of $A$ by an element of a distinct copy of $\mathbb{S}^1$. Let $\pi_{\theta}$ be the quotient map associated with this action on $\mathbb{U}_n^4$. Noting that non-admissible observables are a set of measure zero in $\mathbb{H}_n$, to conclude the desired statement we must show that $\pi_{\theta}(\mathcal{F})$ has measure zero in $\pi_\theta (\mathbb{U}_n^4)$, which follows from the fact that $\mathcal{F}$ is $g_\theta$-invariant.

Thus distribution measurements with almost every quadruple of observables determines any pure state. Moreover, this result is sharp in that for $n\geq 6$, at least 4 observables are required to determine any pure state, which follows from results of \cite{QTPI}.

\begin{corollary}
\label{corollary:main}
Almost every element of $\mathbb{U}_n^4$ with $n\geq 1$ yields an embedding of $\mathbb{C}P^{n-1}$ into $\mathbb{R}^{4n-4}$. That is
when $n\geq 1$, for almost every element $A = \{ U_j \}_{j=1}^4 \in \cS$, where $u_{i}^j$ is the $j$'th column of $U_i$, the map
\[
x \in \mathbb{C}P^{n-1} \mapsto  \left\{ \frac{1}{\|x\|_2^2}  \left(\left| \left< u_{i}^{(1)}, x\right> \right|^2,\left| \left< u_{i}^{(2)}, x\right> \right|^2,\left| \left< u_{i}^{(3)}, x\right> \right|^2,\left| \left< u_{i}^{(4)}, x\right> \right|^2 \right) \right\}_{i=1}^{n-1} \in \mathbb{R}^{4n-4}
\]
is an embedding of $\mathbb{C}P^{n-1}$ into $\mathbb{R}^{4n-4}$.
 \end{corollary}

The corollary follows by first applying Theorem \ref{theorem: main} to $\{ U_i^* \}_{i=1}^4$ and using the results of \cite{QTPI, SFM}, in which it is shown that each $A$ in $\mathbb{C}^{m\times n}$ gives a smooth map from $\mathbb{C}P^{n-1}$ into $\mathbb{R}^{m}$ via
\[
x \in \mathbb{C}P^{n-1} \mapsto \frac{xx^*}{\|x\|_2^2} \in \mathbb{H}_n \mapsto \frac{1}{\|x\|_2^2}\mathcal{A}_A(xx^*) \in \mathbb{R}^{m}
\]
and that this map is an embedding $\iff$ $\cA_A$ is injective on rank-1 psd matrices, which is equivalent to injectivity of $A$ modulo phase. Thus, almost every $\{ U_i^* \}_{i=1}^4 \in \cS \in \mathbb{C}^{4n \times n}$ is injective modulo phase.

Now, take any $A = ( U_1, U_2, U_3, U_4 ) \in \mathbb{U}_n^4 \subseteq \mathbb{C}^{4n \times n}$ that is injective modulo phase. Let $A_1 \in \mathbb{C}^{(4n - 4) \times n}$ consist of all the rows of $A$ except those corresponding to the last row from each $U_i$. Since $\|Ux\|_2^2 = \|x\|_2^2$ for any unitary matrix $U$ and $x \in \mathbb{C}^n$, knowledge of $\|x\|_2^2$ allows us to throw away the last measurement from each unitary matrix without losing information. Thus, we have that the map from $\mathbb{C}P^{n-1}$ to $\mathbb{R}^{4n-4}$ induced by $A_1$ is an embedding, establishing Corollary \ref{corollary:main}. 

Note that $\mathbb{C}P^n$ does not embed into $\mathbb{R}^m$ for $m \leq 4n-\alpha(n)$, where $\alpha$ is the number of $1$'s in the binary expansion of n \cite{IEPS}. Thus, the embedding dimension in Corollary \ref{corollary:main} is optimal for $n = 2^k+1$ for any positive integer $k$, and is otherwise off from optimal by at most a logarithmic factor in $n$.

  

Similarly, having fixed $A = \{U_i\}_{i=1}^4 \in\mathbb{U}_n^4$ to be injective modulo phase, let $A_2 \in \mathbb{C}^{(4n - 3) \times n}$ consist of all the rows of $A$ except those corresponding to the last row of $U_2,U_3, U_4$. Since $\|U_1 x\|_2$ determines $\|x\|_2$, we then have that $A_2$ is injective modulo phase. Therefore, measurements $|A_2 x|^2 \in \mathbb{R}^{4n-3}$ determine any $x$ as an element of  
$\mathbb{C}^n/\mathbb{S}^1$.



\subsection{Exact pure-state recovery via PhaseLift}

It was proven in \cite{CSV} that the PhaseLift algorithm recovers signals $x \in \mathbb{C}^n$ exactly from $m = O(n \log n)$ measurements $\{\left| \left< x, z_i \right> \right|^2\}_{i=1}^m$ with high probability when the measurement vectors $z_i \in \mathbb{C}^n$ are iid gaussian and that this procedure is provably stable with respect to measurement noise under the same assumptions. To be precise, this means that in the noiseless case, for a fixed $x \in \mathbb{C}^n$ and defining the linear operator $\cA: X \in \mathbb{C}^{n \times n} \mapsto \{\tr(X z_i z_i^*) \}_{i=1}^m$, the program
\begin{equation}
\label{eq:tracemin}
 \begin{array}{ll}
    \text{minimize}   & \quad \tr(X)\\ 
    \text{subject to} & \quad  \cA(X) = \cA(xx^*)\\
& \quad X \succeq 0;  
\end{array}
\end{equation}
recovers $xx^*$ with high probability. The stability result uses a modified, noise-aware convex program. These guarantees were subsequently improved to hold uniformly over all signals for $m = O(n)$ with sharp stability guarantees in \cite{Five} and it was shown in \cite{SOR} that in the noiseless case, this program has only one point in its feasible set, namely $xx^*$. 


In our setting of recovery from measurements with full-rank observables, we have $z_i$ as columns of iid Haar distributed unitary matrices. It was proven in a companion paper \cite{QTFFRO} by the latter present author that PhaseLift succeeds with high probability under this unitary measurement model as long as the number of unitary matrices used is $O(1)$ (which corresponds to $m$ = O(n) in the above setting). Specifically, the main theorem from \cite{QTFFRO} reads:


\begin{theorem}
Take $x \in \mathbb{C}^n$ and assume that measurements of the form $\{|U_{k}^*x|^2\}_{k=1}^r$ are available, where the $U_i$ are sampled independently according to the Haar measure on $\mathbb{U}_n$, the unitary group, so that the total number of measurements is $m = rn$. Then the PhaseLift algorithm succeeds in recovering $x$ up to global phase with very high probability with $r = \text{O}(1)$. 
\end{theorem}

Thus, informational completeness is off from efficient recovery by a constant oversampling factor. 

\section{Proof of the main result}
We begin with some simplifying lemmas. Lemma 9 in \cite{SavingPhase} is similar in spirit, but a stronger statement holds in the unitary setting: 
\begin{lemma}
Let $A \in \mathbb{U}_n^4 \subset \mathbb{C}^{4n \times n}$ and call $\cA = \cA_A$. Then $A$ is not injective mod phase $\iff$ there is a rank-2 Hermitian matrix with eigenvalues $1,-1$ in the nullspace of $\cA$. 
\end{lemma}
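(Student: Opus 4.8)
The plan is to prove both implications by identifying the nonzero Hermitian matrices in the nullspace of $\cA$ with (scalar multiples of) differences of rank-one projections.

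The ($\Leftarrow$) direction needs nothing beyond the spectral theorem. If $N$ is a rank-$2$ Hermitian matrix with eigenvalues $1,-1$ and $\cA(N)=0$, write its spectral decomposition $N = uu^* - vv^*$ with $u,v \in \mathbb{C}^n$ orthonormal. Then $\cA(uu^*) = \cA(vv^*)$, while $uu^* \neq vv^*$ since their ranges $\operatorname{span}\{u\}$ and $\operatorname{span}\{v\}$ are distinct. Taking $x=u$, $y=v$ exhibits $A$ as not injective mod phase.

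For the ($\Rightarrow$) direction, take witnesses $x,y$ with $xx^* \neq yy^*$ and $\cA(M)=0$, where $M := xx^* - yy^*$. First I would dispense with the degenerate case: if $x,y$ are linearly dependent (in particular if either is zero), then $M = c\,ww^*$ for a scalar $c$ and a vector $w$; since $xx^* \neq yy^*$ forces $c\neq 0$, $\cA(ww^*)=0$ would give $\langle w, z_i\rangle = 0$ for every row $z_i$ of each of the four unitary matrices, but the rows of a single unitary matrix already form an orthonormal basis of $\mathbb{C}^n$, so $w=0$, a contradiction. Hence $x,y$ are linearly independent, $M$ is supported on the $2$-dimensional space $W := \operatorname{span}\{x,y\}$, and has rank at most $2$ there. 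Next I would pin down the signature of $M$: a short computation gives $\tr(M) = \|x\|_2^2 - \|y\|_2^2$ and $\tr(M^2) = \|x\|_2^4 + \|y\|_2^4 - 2|\langle x,y\rangle|^2$, so the product of the two eigenvalues of $M$ on $W$ is $\tfrac12\!\left(\tr(M)^2 - \tr(M^2)\right) = |\langle x,y\rangle|^2 - \|x\|_2^2\|y\|_2^2 < 0$ by Cauchy--Schwarz and linear independence. Thus $M$ has rank exactly $2$ with one positive and one negative eigenvalue, say $M = \lambda\, uu^* - \mu\, vv^*$ with $u,v$ an orthonormal eigenbasis of $W$ and $\lambda,\mu > 0$. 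From $\cA(M)=0$ we obtain $\lambda\,\cA(uu^*) = \mu\,\cA(vv^*)$ coordinatewise.

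The last step is the crux, and it is the only place the hypothesis that $A$ is built from genuine unitary matrices (rather than an arbitrary frame) is used: summing the $n$ coordinates coming from any one of the four unitary blocks and using that its rows form an orthonormal basis gives $\sum_i |\langle u, z_i\rangle|^2 = \|u\|_2^2 = 1 = \|v\|_2^2 = \sum_i |\langle v, z_i\rangle|^2$, whence $\lambda = \mu$. Therefore $N := \lambda^{-1}M = uu^* - vv^*$ is a rank-$2$ Hermitian matrix with eigenvalues $1,-1$ in the nullspace of $\cA$. The main obstacle is precisely this normalization: in the general phase-retrieval setting one only gets a signature-$(1,1)$ matrix with eigenvalues $\lambda,-\mu$ and no reason for $\lambda = \mu$, so the analogue of Lemma~9 in \cite{SavingPhase} cannot be sharpened there; the Parseval identity afforded by the unitary structure is exactly what forces the eigenvalues to be $\pm 1$.
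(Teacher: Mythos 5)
Your proof is correct and follows essentially the same route as the paper: decompose the difference $xx^*-yy^*$ spectrally and use the fact that the rows of each unitary block form an orthonormal basis (the Parseval identity) to force the positive and negative eigenvalues to have equal magnitude, then rescale. In fact your treatment is slightly more complete than the paper's, since you explicitly rule out the linearly dependent case and verify the signature $(1,1)$ via the trace computation, steps the paper passes over when asserting that $X$ is indefinite.
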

\begin{proof}
First, take any rank-2 indefinite matrix $X \neq 0$. It can be written as $X = xx^* - yy^*$ for some non-zero $x,y \in \mathbb{C}^n$. If $\cA(X) = 0$, then $\cA(xx^*) = \cA(yy^*)$ and thus $A$ is not injective modulo phase.

Now, assume that $A$ is not injective modulo phase. Thus $\cA(xx^*) = \cA(yy^*)$ for some $x,y \in \mathbb{C}^n$ such that $xx^* \neq yy^*$. Defining $X = xx^* - yy^*$, this gives $\cA(X) = 0$. We have that necessarily $xx^* \neq 0$ and $yy^* \neq 0$ because if, say wlog $xx^* = 0$, then $\cA(xx^*) = 0 \implies \cA(yy^*) = 0$, but since 
\[
0=\|\cA(yy^*)\|_1 = \sum_{i=1}^m \tr(yy^* \bar{z}_i \bar{z}_i^*) = \sum_{i=1}^m \left| \left< \bar{z}_i, y \right> \right|^2 = 4\|y\|_2^2
\]
, where $z_i$ are the rows of $A$, this implies that $y = 0$, which contradicts $xx^* \neq yy^*$. Thus $X$ is an indefinite Hermitian matrix. By linearity, we can assume that $\|X\|_F = \sqrt{2}$, where $\|.\|_F$ is the Frobenius norm. Now, consider the eigenvalue decomposition of $X = xx^* - yy^*$: 
\[
X = \lambda_1 uu^* + \lambda_2 vv^*
\]
with eigenvalues $\lambda_1>0, \lambda_2 < 0$. where $\left<u,v\right> = 0$ and $\|u\|_2 = \|v\|_2 = 1$. Then since $\cA(X) = 0$, we have $\lambda_1 \cA(uu^*) = -\lambda_2 \cA(vv^*)$ and since $\cA(uu^*) \geq 0$, we have
 \[
\lambda_1 \|\cA(uu^*)\|_1 = -\lambda_2 \|\cA(vv^*)\|_1  \implies \lambda_1 = -\lambda_2
\]
 since $\|\cA(uu^*)\|_1 = \|\cA(vv^*)\|_1 = 4$. By $\|X\|_F^2 = 2 =  \lambda_1^2 + \lambda_2^2$, we have $\lambda_1 = 1, \lambda_2 = -1$. Thus, if $\cA$ is not injective modulo phase, there exists a rank 2 indefinite Hermitian matrix in the nullspace of $\cA$, with eigenvalues 1, -1.
\end{proof}

Recall that
\[
\mathcal{F} = \{A \in \cS; \quad \exists  x, y \in \mathbb{C}^n, \quad  xx^* \neq yy^*, \quad \cA_A(xx^*) = \cA_A(yy^*)\}
\]
Now, we will define a simpler variety inside $\mathcal{F}$, which generates $\mathcal{F}$ under coordinate-wise right-multiplication by $\mathbb{U}_n$. Letting $e_i\in\mathbb{C}^n$ denote the standard basis vectors, define the set 
\[
N_{e_1,e_2} = \{A\in \cS; \quad \cA_{A} (e_1 e_1^* - e_2 e_2^*) = 0 \}
\]
\begin{lemma}
Let $\pi_1$ denote the quotient map associated to the action of $\mathbb{U}_n$ by coordinate-wise right multiplication on $\cS$. Then $\pi_1(\mathcal{F}) = \pi_{1}(N_{e_1,e_2})$.
\end{lemma}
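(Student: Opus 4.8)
The plan is to leverage the equivariance of the measurement operator under the right $\mathbb{U}_n$-action and then reduce any "bad" witness to the standard one $e_1 e_1^\adj - e_2 e_2^\adj$. The first step is the elementary observation that for $A \in \cS$ and $U \in \mathbb{U}_n$ one has $\cA_{AU}(X) = \cA_A(U X U^\adj)$ for every Hermitian $X$: if $z_i$ are the rows of $A$ (as vectors of $\mathbb{C}^n$), then the rows of $AU$ are $U^\adj z_i$, so $\tr\bigl((U^\adj z_i)(U^\adj z_i)^\adj X\bigr) = \tr\bigl(z_i z_i^\adj\, U X U^\adj\bigr)$ by cyclicity of the trace. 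Since $\pi_1(AU) = \pi_1(A)$ by definition of the quotient map, this says precisely that $\pi_1(A) = \pi_1(B)$ exactly when $\cA_B(\cdot) = \cA_A(U(\cdot)U^\adj)$ for some unitary $U$.

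For the inclusion $\pi_1(N_{e_1,e_2}) \subseteq \pi_1(\mathcal{F})$, note that $e_1 e_1^\adj - e_2 e_2^\adj = \diag(1,-1,0,\dots,0)$ is a rank-$2$ Hermitian matrix with eigenvalues $1$ and $-1$. Hence by the preceding lemma any $A \in N_{e_1,e_2}$ fails to be injective mod phase, so $N_{e_1,e_2} \subseteq \mathcal{F}$ and the inclusion of images is immediate.

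For the reverse inclusion, take $A \in \mathcal{F}$. The preceding lemma furnishes a rank-$2$ Hermitian matrix $X$ in the nullspace of $\cA_A$ with eigenvalues $1$ and $-1$, so $X = uu^\adj - vv^\adj$ with $\iprod{u}{v} = 0$ and $\norm{u}_2 = \norm{v}_2 = 1$. Extending $\{u,v\}$ to an orthonormal basis of $\mathbb{C}^n$, pick $U \in \mathbb{U}_n$ with $U e_1 = u$ and $U e_2 = v$; then $U(e_1 e_1^\adj - e_2 e_2^\adj)U^\adj = uu^\adj - vv^\adj = X$. Setting $B = AU \in \cS$, the equivariance from the first step gives $\cA_B(e_1 e_1^\adj - e_2 e_2^\adj) = \cA_A(X) = 0$, so $B \in N_{e_1,e_2}$, while $\pi_1(B) = \pi_1(A)$. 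Hence $\pi_1(\mathcal{F}) \subseteq \pi_1(N_{e_1,e_2})$, and combining the two inclusions yields the claimed equality.

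The content here is genuinely light; the only point demanding care is pinning down the conjugation convention in the first step (whether right multiplication by $U$ conjugates the argument of $\cA_A$ by $U$, $U^\adj$, $\bar U$ or $U^\transp$), since this fixes whether one writes $B = AU$ or $B = A\bar U$ at the end. It is also worth stressing that the reduction is to the single ordered pair $(e_1,e_2)$, not to all pairs of basis vectors: the unitary freedom in the action suffices to rotate any orthonormal pair $(u,v)$ — together with the exact matching of the eigenvalues $1,-1$ forced by the previous lemma — onto $(e_1,e_2)$.
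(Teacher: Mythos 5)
Your proposal is correct and follows essentially the same route as the paper: use the preceding lemma to extract a rank-two witness $uu^\adj-vv^\adj$ with orthonormal $u,v$, rotate the pair $(e_1,e_2)$ onto it by a unitary acting on the right, and use $\pi_1(AU)=\pi_1(A)$, with the easy inclusion coming from $N_{e_1,e_2}\subseteq\mathcal{F}$. The only cosmetic difference is the conjugation convention you flag (the paper chooses $Ue_1=\bar{x}$, $Ue_2=\bar{y}$ consistent with its row convention), which does not affect the argument.
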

\begin{proof}
Note that $N_{e_1,e_2} \subseteq \mathcal{F}$. Assume that some $\cA_A$, corresponding to $A \in \cS$, is not injective mod phase. By Lemma 2.3, we must have $\cA_A(xx^* - yy^*) = 0$ for some unit normed and orthogonal $x,y \in \mathbb{C}^n$. Now, take some $U \in \cU$ such that $U e_1 = x, U e_2 = y$. Then, $\cA_{A\cdot U}$ satisfies
 \[
  \cA_{A\cdot U}(e_1 e_1^* - e_2 e_2^*) = \left| AU e_1 \right|^2 - \left| A U e_2 \right|^2 = \left| Ax \right|^2 - \left| A y \right|^2 = \cA_{A}(xx^*-yy^*) = 0.
  \]
Since $\pi_1(A\cdot U) = \pi_1(A)$, we have that 
\[
A \in \mathcal{F} \implies N_{e_1,e_1} \bigcap \left(A\cdot\cU\right) \neq \emptyset.
\]
This, coupled with $N_{e_1,e_2} \subseteq \mathcal{F}$, implies that $\pi_1(\mathcal{F}) = \pi_1 (N_{e_1,e_2})$. 
\end{proof}

The point of this lemma is that since $\pi_1(\mathcal{F}) = \pi_1 (N_{e_1,e_2})$, it suffices to show that $\pi_1(N_{e_1,e_2})$ has measure zero in $\cU^4/\mathbb{U}_{n}$ to establish the main theorem.

\begin{lemma}
\label{lemma: Qdim}
Let $M$ be a smooth manifold and let $G, G'$ be compact Lie groups which act smoothly, freely and properly on $M$, such that $G' \leq G$. Assume that $(M, B, \pi, F)$ is a fiber bundle with projection map $\pi$, base space $B$ and fiber F, such that $\pi(pG') = \pi(p)$ for any $p \in M$. Now, let $N'$ be a submanifold of $B$. Then, $N=\pi^{-1}(N')$ is a $G'$-stable submanifold of $M$, with 
\[
\dim(N) = \dim(N') + \dim(F),
\]
and if 
\[
\dim(N/G') < \dim(M/G),
\]
we have that $N/G$ has measure zero in any chart on $M/G$. In particular, if $M/G$ is a Lie group, $N/G$ has measure zero with respect to the Haar measure on $M/G$.
\end{lemma}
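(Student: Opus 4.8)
The plan is to read off the structural claims directly from the submersion property of $\pi$ together with the compatibility hypothesis $\pi(pG')=\pi(p)$, and then to reduce the measure-zero assertion to the elementary fact that a $C^1$ map into a strictly higher-dimensional manifold has image of measure zero. First I would observe that since $(M,B,\pi,F)$ is a fiber bundle, $\pi$ is a surjective submersion, so $N=\pi^{-1}(N')$ is an embedded submanifold of $M$ of codimension $\dim B-\dim N'$; hence $\dim N=\dim N'+(\dim M-\dim B)=\dim N'+\dim F$, using $\dim M-\dim B=\dim F$. The hypothesis $\pi(pG')=\pi(p)$ says precisely that $\pi$ is constant on $G'$-orbits, so for $p\in N$ and $g'\in G'$ we get $\pi(pg')=\pi(p)\in N'$, i.e.\ $pg'\in N$; thus $N$ is $G'$-stable. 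Because $G'$ is compact its restricted action on $N$ is proper, and it is free since the $G'$-action on $M$ is; so by the quotient manifold theorem $N/G'$ is a smooth manifold with $\dim(N/G')=\dim N-\dim G'$, and the orbit projection $\rho\colon N\to N/G'$ is a surjective submersion.

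Next I would build the comparison map to $M/G$. Since $G$ acts freely and properly, $q\colon M\to M/G$ is a submersion onto a smooth manifold. The restriction $q|_N\colon N\to M/G$ is smooth and, because $G'\le G$, it is constant on $G'$-orbits, $q(pg')=q(p)$. By the universal property of the surjective submersion $\rho$ there is a unique smooth map $\psi\colon N/G'\to M/G$ with $\psi\circ\rho=q|_N$, and its image is exactly $q(N)$. This is the correct reading of ``$N/G$'' here, since $N$ need not be $G$-stable and $G$ need not act on $N$ at all, so the object to study is $q(N)\subseteq M/G$, accessed through $N/G'$ rather than through a (possibly nonexistent) quotient of $N$ by $G$. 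Now I invoke the dimension hypothesis $\dim(N/G')<\dim(M/G)$: covering the second-countable manifold $N/G'$ by countably many coordinate charts on each of which $\psi$ is Lipschitz into a chart of $M/G$, and using that the image of a Lipschitz map from a subset of $\R^{\dim(N/G')}$ into $\R^{\dim(M/G)}$ is Lebesgue-null, I conclude that $q(N)=\psi(N/G')$ has measure zero in every chart of $M/G$. Finally, if $M/G$ is a Lie group, Haar measure has a smooth positive density in each chart, hence is absolutely continuous with respect to Lebesgue measure there, so $q(N)$ is Haar-null.

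I do not expect a serious obstacle: the proof uses only the quotient manifold theorem, the universal property of surjective submersions, and the Sard-type remark that $C^1$ maps are measure-zero-preserving. The two points that require care are (i) verifying that $N$ is genuinely $G'$-stable — this is exactly what the compatibility hypothesis $\pi(pG')=\pi(p)$ provides, and is the reason $G'$ rather than $G$ appears on the domain side — and (ii) interpreting ``$N/G$'' as the image $q(N)$ and routing the argument through the genuine quotient manifold $N/G'$ and the descended map $\psi$, which is where the dimension inequality $\dim(N/G')<\dim(M/G)$ is consumed.
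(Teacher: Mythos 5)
Your argument is correct and follows essentially the same route as the paper: realize $N=\pi^{-1}(N')$ as a submanifold of dimension $\dim(N')+\dim(F)$, note $G'$-stability from $\pi(pG')=\pi(p)$, apply the quotient manifold theorem to the $G'$-action on $N$, descend the restriction of the quotient map $M\to M/G$ through the surjective submersion $N\to N/G'$ to a smooth map into $M/G$, and conclude that its image (which is how ``$N/G$'' is to be read) is null because the domain has strictly smaller dimension. The only cosmetic differences are that you get the submanifold structure of $N$ from the submersion preimage theorem rather than from explicit fiber-bundle charts, and you prove the final null-image step directly via countably many locally Lipschitz chart representations rather than quoting the (easy case of) Sard's theorem as the paper does.
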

\begin{proof}
By properties assumed of $G$ and $M$, $M/G$ is a smooth manifold and the quotient map
\[
\pi_1 : M \mapsto M/G
\]
is smooth. Moreover, 
\[
\dim(M/G) = \dim(M) - \dim(G).
\]
Since $M$ is a fiber bundle, we have that for any point $p \in M$, there is a chart 
\[
\left(U \times Y \subseteq M, \phi: p \in U \times Y \mapsto (u_1,u_2,\ldots u_l, x_1, \ldots x_n) \in U'\times Y' \subseteq \mathbb{R}^{\dim(B)}\times \mathbb{R}^{\dim(F)})\right)
\]
where $l = \dim(B)$, $U$ is an open neighborhood of $\pi(p)$ and $U'$ and $Y'$ are open subsets of $\mathbb{R}^{\dim(B)}$ and $\mathbb{R}^{\dim(F)}$. Now, since $N'$ is a submanifold of the base space $B$, and $\left(U, u_1,\ldots u_l\right)$ is a chart for $\pi(p) \in B$, we can refine the coordinates $u_1,\ldots, u_l$ such that the submanifold $N'$ can be expressed locally as $(u_1=0,\ldots u_r=0, u_{r+1},\ldots u_{l})$, where $r = \dim(B) - \dim(N')$. Therefore, 
\[
(u_1=0,\ldots u_r = 0, u_{r+1},\ldots u_l, x_1,\ldots x_n)
\]
gives coordinates for $N$ as a submanifold of $M$. Note that $N$ is $G'$-stable. The action of $G'$ on $M$ restricts to a smooth and free action on $N$, which is furthermore proper since $G'$ is compact. We then have that the associated quotient map
\[
\pi_2 : N \mapsto N/G'
\]
is a surjective submersion and $N/G'$ is a smooth manifold, with $\dim(N/G') = \dim(N)-\dim(G')$. Since $N$ is a submanifold, $\pi_1$ restricts to a smooth map on $N$ and thus, since $\pi_1 \vert_{N} = g \circ \pi_2$, where
\[
g: N/G' \mapsto M/G
\]
sends an element of $N/G'$ to its $G$-orbit in $M/G$, we have that $g$ is smooth by Proposition 5.19 in \cite{LSM}. By construction,
\[
N/G = \pi_1(N) = g\circ \pi_2 (N) = g(N/G') \subseteq M/G.
\]
Thus, $N/G$ is the image of a smooth map, in a manifold of dimension  $\dim(M) - \dim(G)$, from a manifold of dimension $ \dim(N) - \dim(G') < \dim(M) - \dim(G)$. By Sard's theorem, we have therefore that $N/G$ has measure zero in $M/G$.

\end{proof}

Using the notation of Lemma \ref{lemma: Qdim}, let $M = \mathbb{U}_{n}^4$, $G = \mathbb{U}_n$ and 
\[
G' =\{U \in \cU; \left|e_i^*Ue_i\right| = 1, i = 1,2\} \cong \mathbb{S}^1 \times \mathbb{S}^1 \times \mathbb{U}_{n-2}.
\]
Consider $G,G'$ acting by right multiplication on $M$. We will show that the set $N_{e_1,e_2}$ can be expressed as a union of manifolds which satisfy the properties of $N$ in Lemma \ref{lemma: Qdim}. 

First note that $G$ and $G'$ both act smoothly, freely and properly by right multiplication on $M$, the last property due to each being a compact Lie group. Define
\[
G'' = \{ \{U_{i}\}_{i=1}^4 \in \cS; \left|e_1^*U_{i}e_1\right| =  \left|e_2^*U_{i}e_2\right| = 1, i = 1,\ldots,4, e_j^*U_{l}e_j = e_j^*U_{k}e_j, j=1,2. \quad 1\leq l < k \leq 4 \} 
\]
as a subgroup of the product Lie group $\mathbb{U}_{n}^4$ and let $G''$ act on $\cS$ by right multiplication in each component. Since $G''$ is a closed Lie subgroup of $\cU^4$, we have that 
\[
(M = \mathbb{U}_{n}^4, B = \mathbb{V}_{2}(\mathbb{C}^n)^4 / (\mathbb{S}^1 \times \mathbb{S}^1), \pi, F = \mathbb{S}^1 \times \mathbb{S}^1 \times \mathbb{U}_{n-2}^4 \cong G'')
\]
is a fiber bundle, with base space
\[
B = \cS/G'' \cong \mathbb{V}_{2}(\mathbb{C}^n)^4 / (\mathbb{S}^1 \times \mathbb{S}^1) ,
\]
where $\mathbb{V}_{2}(\mathbb{C}^n)$ is the Stiefel manifold of complex orthonormal 2-frames, the projection map
$\pi$ is the quotient map associated to the action of $G''$ and the fiber $F$ is diffeomorphic to $G'' 
\cong \mathbb{S}^1 \times \mathbb{S}^1 \times \mathbb{U}_{n-2}^4$. The quotient by $\mathbb{S}^1 \times \mathbb{S}^1$ is to be interpreted as given by the equivalence relation 
\[
 (u_1^1,\ldots u_1^4, u_2^1,\ldots u_2^4) \equiv (e^{i\theta_1}u_1^1,\ldots e^{i\theta_1}u_1^4, e^{i\theta_2}u_2^1,\ldots e^{i\theta_2}u_2^4)
 \]
 for any $\theta_i \in \mathbb{R}$.

It is clear that for any $p \in \mathbb{U}_{n}^4$, we have $\pi(pG') = \pi(p)$, since $G'$ can be thought of as a subgroup of $G''$ in the product Lie group $\mathbb{U}_{n}^4$. Moreover, $M/G \cong \mathbb{U}_{n}^3$ is a compact Lie group. Thus, $G,G'$ and $(M,B,\pi,F)$ satisfy the conditions of Lemma \ref{lemma: Qdim}.

Consider $P = \pi(N_{e_1,e_2}) = N_{e_1,e_2}/G''$ as a subset of the base space. We state here an intermediary theorem which we prove in the next section:

\begin{theorem}
\label{theorem: dimension}
$P$ may be expressed as
\[
P = \bigcup_{\alpha = 1}^k P_\alpha ' \subseteq \mathbb{V}_{2}(\mathbb{C}^n)^4 / (\mathbb{S}^1 \times \mathbb{S}^1),
\]
for some integer $k$, where each $P_{\alpha}'$ is a submanifold of $\mathbb{V}_{2}(\mathbb{C}^n)^4 / (\mathbb{S}^1 \times \mathbb{S}^1)$ and $\dim(P_\alpha ') \leq 4(3n-3)-2$. 
\end{theorem}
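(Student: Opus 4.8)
The plan is to make the equation defining $N_{e_1,e_2}$ explicit, recognize the resulting set (after passing to $B$) as semialgebraic and hence a finite union of Nash submanifolds, and then reduce the dimension bound to one elementary count. First I would note that for $A=(U_1,\dots,U_4)\in\cS$ the condition $\cA_A(e_1e_1^*-e_2e_2^*)=0$ is equivalent to $|(U_k)_{j1}|^2=|(U_k)_{j2}|^2$ for all $j\in\{1,\dots,n\}$ and $k\in\{1,\dots,4\}$; that is, for each $k$ the first two columns $v_1^{(k)},v_2^{(k)}$ of $U_k$, which form an orthonormal $2$-frame, have entrywise equal moduli. Since $\pi\colon\cS\to B$ is precisely the quotient that records, for each $k$, the pair of first two columns of $U_k$ up to the residual torus $\mathbb{S}^1\times\mathbb{S}^1$ (and forgets columns $3,\dots,n$), the identification $B\cong\mathbb{V}_2(\mathbb{C}^n)^4/(\mathbb{S}^1\times\mathbb{S}^1)$ gives
\[
P=\pi(N_{e_1,e_2})=\widetilde{N}/(\mathbb{S}^1\times\mathbb{S}^1),\qquad \widetilde{N}=\Big\{\big(v_1^{(k)},v_2^{(k)}\big)_{k=1}^4\in\mathbb{V}_2(\mathbb{C}^n)^4:\ |(v_1^{(k)})_j|=|(v_2^{(k)})_j|\ \ \forall\,j,k\Big\}.
\]
The constraints decouple over $k$, so $\widetilde{N}=\widetilde{N}_1^{\times 4}$ with $\widetilde{N}_1=\{(v_1,v_2)\in\mathbb{V}_2(\mathbb{C}^n):\ |(v_1)_j|=|(v_2)_j|,\ 1\le j\le n\}$, and the residual torus acts diagonally and freely on $\widetilde{N}$ because every $v_i^{(k)}$ is a unit vector.

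Next, $\widetilde{N}$ is a real algebraic subset of the real algebraic manifold $\mathbb{V}_2(\mathbb{C}^n)^4$, being cut out by the polynomials $|(v_1^{(k)})_j|^2-|(v_2^{(k)})_j|^2$. I would then invoke the existence of a finite Nash stratification of $\widetilde{N}$ compatible with the algebraic action of the compact group $\mathbb{S}^1\times\mathbb{S}^1$ (or, to sidestep the equivariance black box, stratify $\widetilde{N}$ by hand by coordinate support and then by the transverse and degenerate loci of the phase equation below, checking torus-stability directly): $\widetilde{N}=\bigcup_{\alpha=1}^{k}\widetilde{N}_\alpha$ with each $\widetilde{N}_\alpha$ a $(\mathbb{S}^1\times\mathbb{S}^1)$-stable Nash submanifold of $\mathbb{V}_2(\mathbb{C}^n)^4$. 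Since the torus acts freely and properly on each $\widetilde{N}_\alpha$, the quotient $P_\alpha':=\widetilde{N}_\alpha/(\mathbb{S}^1\times\mathbb{S}^1)$ is a Nash submanifold of $B$, we have $P=\bigcup_\alpha P_\alpha'$, and $\dim P_\alpha'=\dim\widetilde{N}_\alpha-2\le\dim\widetilde{N}-2=4\dim\widetilde{N}_1-2$. Thus the theorem reduces to the single inequality $\dim\widetilde{N}_1\le 3n-3$ (and, for $n=1$, to the remark that $\widetilde{N}_1$, hence $P$, is empty).

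To bound $\dim\widetilde{N}_1$ I would first stratify by the common support $S=\operatorname{supp}v_1=\operatorname{supp}v_2\subseteq\{1,\dots,n\}$: the stratum with support $S$ is the analogous set inside $\mathbb{V}_2(\mathbb{C}^{|S|})$, so, since the strata with $|S|\le n-1$ contribute dimension $3|S|-3\le 3n-6$ and $|S|\le1$ gives the empty set, it suffices to handle the open stratum $\widetilde{N}_1^{\circ}$ on which every coordinate is nonzero. There the map $(v_1,v_2)\mapsto(r,\alpha,\gamma)$ with $r_j=|(v_1)_j|$, $\alpha_j=\arg(v_1)_j$, $\gamma_j=\arg(v_2)_j-\arg(v_1)_j$ is a diffeomorphism onto
\[
\mathcal{Y}=\Big\{(r,\alpha,\gamma)\in\mathbb{R}_{>0}^n\times(\mathbb{S}^1)^n\times(\mathbb{S}^1)^n:\ \sum_j r_j^2=1,\ \sum_j r_j^2 e^{i\gamma_j}=0\Big\},
\]
using that $\|v_1\|=\|v_2\|$ holds automatically and $\langle v_1,v_2\rangle=\sum_j r_j^2 e^{i\gamma_j}$. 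Inside the $(3n-1)$-manifold $\{\sum_j r_j^2=1\}$, the two real equations $\sum_j r_j^2 e^{i\gamma_j}=0$ have $\gamma$-differential $(i\,r_j^2 e^{i\gamma_j})_j$, which spans $\mathbb{C}$ unless all the $\gamma_j$ are congruent modulo $\pi$; off that locus $\mathcal{Y}$ is a manifold of dimension $3n-3$, while the exceptional locus lies in a finite union over sign patterns $\sigma\in\{\pm1\}^n$ of sets of the form $\{\gamma_j\in\{\gamma_1,\gamma_1+\pi\},\ \sum_j\sigma_j r_j^2=0\}$, each of dimension at most $(n-1)+n+1-1=2n-1<3n-3$ when $n\ge3$ (the case $n=2$, where $\mathcal{Y}$ forces $r_j\equiv1/\sqrt2$ and $\gamma_2=\gamma_1+\pi$ and is thus a $3$-manifold, is immediate). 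Hence $\dim\widetilde{N}_1=3n-3$, as needed.

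The one substantive step is this last paragraph, the upper bound on $\dim\widetilde{N}_1$. Viewed inside $\mathbb{V}_2(\mathbb{C}^n)$ (of dimension $4n-4$), the set $\widetilde{N}_1$ is cut out by the $n$ equations $|(v_1)_j|^2=|(v_2)_j|^2$, the sum of which holds automatically on $\mathbb{V}_2(\mathbb{C}^n)$, so the expected dimension is $3n-3$; the only way the true dimension could exceed this is along a positive-codimension locus where the orthogonality equation $\sum_j r_j^2 e^{i\gamma_j}=0$ fails to be transverse, and the heart of the argument is that this locus — precisely where all the phase offsets $\gamma_j$ coincide modulo $\pi$ — is genuinely of smaller dimension. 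Everything else (the reduction to $\widetilde{N}_1$, the semialgebraicity of $\widetilde{N}$, and the equivariant Nash stratification producing the $P_\alpha'$) is routine bookkeeping or a standard appeal to real algebraic geometry.
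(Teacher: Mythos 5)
Your proposal is correct, but it reaches the theorem by a genuinely different route than the paper in both technical steps. The first reduction is the same: you and the paper both rewrite $P$ as (a quotient of) four copies of the set $W=\{(u_1,u_2)\in\St:\ |u_{1i}|=|u_{2i}|\ \forall i\}$ of orthonormal $2$-frames with entrywise equal moduli. After that you diverge. For the residual torus, the paper never quotients stable submanifolds: it gauge-fixes the two phases inside explicit charts $U_{ij}$ on $Z$, which is why it needs the auxiliary sets $W_{ij}$ and the sharper bound $\dim W_{ij}\le 3n-5$, arriving at $(3n-5)+3(3n-3)=4(3n-3)-2$; you instead keep the full product $W^{\times 4}$, stratify it $(\mathbb{S}^1\times\mathbb{S}^1)$-equivariantly, and subtract $2$ by passing to the free, proper quotient. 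That works, but note two facts you invoke without proof: a plain Nash stratification (the paper's Proposition 9.1.8 reference) is not automatically torus-stable, so your parenthetical ``by hand'' stratification (by support, then by the transverse versus degenerate locus of $\sum_j r_j^2 e^{i\gamma_j}=0$, then by sign pattern) is actually the load-bearing construction, and you also need the standard slice-theorem fact that a stable embedded stratum descends to an embedded submanifold of $B$ of dimension two less; the paper's chart-based bookkeeping is precisely what lets it avoid both points. For the core bound $\dim W\le 3n-3$, the paper introduces the auxiliary real algebraic set $X$ (adding unit-modulus phase variables $x+iy$), proves by a differential computation that $X$ is smooth of dimension $3n-3$, and pushes it onto $W$ by the semialgebraic map $S$, which uniformly handles all degeneracies and small $n$ with no case analysis; your argument works directly on $W$ in polar-type coordinates, checking transversality of the orthogonality equation and bounding the exceptional locus (all $\gamma_j$ congruent mod $\pi$) by $2n-1$, with separate remarks for $n=1,2$ and lower supports. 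Your route is more elementary and makes the geometric source of the codimension visible; the paper's buys smoothness of a single auxiliary complete intersection, the extra $W_{ij}$ estimate its chart argument needs, and freedom from equivariance and quotient-of-submanifold technicalities. One cosmetic caveat: if you want the strata to be literally Nash/semialgebraic, phrase the angular coordinates via unit complex numbers rather than arguments, though for the theorem as stated smooth submanifolds suffice.
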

Thus, using this theorem, Lemma \ref{lemma: Qdim}, and noting that $N_{e_1,e_2}$ is $G''$-stable, we have that
\[
N_{e_1,e_2} = \pi^{-1}(P) = \pi^{-1}(\bigcup_{\alpha=1}^k P_\alpha ') = \bigcup_{\alpha=1}^k \pi^{-1}(P_\alpha ')
\]
 is itself a union of submanifolds: $N_{e_1,e_2} = \bigcup_{\alpha=1}^k P_\alpha$, where $P_\alpha = \pi^{-1}(P_\alpha ')$ and furthermore,
 \[
 \dim(P_\alpha) \leq \dim(P_\alpha ') + \dim(F) =  \dim(P_\alpha') + \dim(\mathbb{S}^1 \times \mathbb{S}^1 \times \mathbb{U}_{n-2}^4) \leq 4(3n-3) + 4(n-2)^2 
 \]
 Also, each $P_\alpha$ is $G'$-stable, and modding out by $G'$ we have
 \[
 \dim(P_\alpha/G') = \dim(P_\alpha) - \dim(G') = 3n^2-2. 
 \]
 Thus, since $\dim(P_\alpha/G') \leq 3n^2-2 < 3n^2 =\dim(M/G)$, Lemma \ref{lemma: Qdim} gives that each $P_\alpha/G$ has measure zero in $M/G$. 
 
 Now, since
 \[
 \pi_1(N_{e_1,e_2}) = \pi_1(\bigcup_{\alpha=1}^k P_\alpha) = \bigcup_{\alpha=1}^k \pi_1(P_\alpha),
 \]
 we have that $N_{e_1,e_2}/G$ has measure zero in $\cU^4/G$, because the union is finite. This implies that $\mathcal{F}/\mathbb{U}_{n}$ has measure zero in $\cS/\mathbb{U}_n$, completing the proof of Theorem \ref{theorem: main}.
  
\subsection{Proof of Theorem \ref{theorem: dimension}}
Define the space $Z=(\mathbb{C}^{2n})^4/\mathbb{S}^1 \times \mathbb{S}^1 = (\mathbb{R}^{4n})^4/\mathbb{S}^1 \times \mathbb{S}^1$ and consider 
\[
P = N_{e_1,e_2}/(\mathbb{S}^1 \times \mathbb{S}^1 \times \mathbb{U}_{n-2}^r) \subseteq B =\St^4/(\mathbb{S}^1 \times \mathbb{S}^1) \subseteq Z. 
\]
We have $P$ =
\[
\{\{(u_1^j,u_2^j)\}_{j=1}^4 \in (\mathbb{C}^{2n})^4; \|u_1^j\|_2 = \|u_2^j\|_2 = 1, \left<u_1^j,u_2^j \right> = 0, \left|u_{1i}^{j}\right|^2 = \left| u_{2i}^{j}\right|^2, i=1,2, \ldots n, j = 1,2,3, 4 \} / (\mathbb{S}^1 \times \mathbb{S}^1).
\]

For $1\leq i < j \leq n$, at a point $z \in Z$ for which $u_{1i}^1 \neq 0, u_{2j}^1 \neq 0$, consider the following charts on $Z$,
\[
\left(U_{ij} = \{\{(u_1^j,u_2^j)\}_{j=1}^4 \in Z; u_{1i}^1 \neq 0, u_{2j}^1 \neq 0 \}, \phi_{ij}\right)
\]
The coordinate maps $\phi_{ij}$ on these charts send 
\[
\{(u_1^j,u_2^j)\}_{j=1}^4 \in Z \mapsto  (\pi_{/ij}(e^{i\theta_{1}}u_1^j, e^{i\theta_{2}}u_2^j), \{(e^{i\theta_{1}}u_1^j, e^{i\theta_{2}}u_2^j)\}_{j = 2}^4) \in \mathbb{R}^{4n\times 4 - 2}
\]
where $\pi_{/ij}$ takes $ (u_{1}^1,u_2^1)\in \mathbb{C}^{2n}$ to $\mathbb{R}^{4n-2}$ by keeping all but the imaginary parts of $u_{1i}^1$ and $u_{2j}^1$ and $e^{i\theta_1}, e^{i\theta_2}$ are chosen such that $im(u_{1i}^1)=0$ and $im(u_{2j}^1)=0$.

By orthonormality of vectors in $\St$, we have that
\[
P = \bigcup_{1\leq i < j\leq n} P \cap U_{ij}
\]
Define
\[
W = \{ (u_1,u_2) \in \St; |u_{1i}| = |u_{2i}|, i = 1,2\ldots n\} \subseteq \mathbb{C}^{2n}
\]
where $\St$ is the Steifel manifold of two orthonormal complex n-dimensional vectors, and let
\[
W_{ij} = \left\{(u_1,u_2) \in W; u_{1i} \neq 0, u_{2j} \neq 0, im(u_{1i})=im(u_{2j})=0\right\} \subseteq \mathbb{C}^{2n}
\]
In coordinates on the charts $U_{ij}$, we have
\[
\phi_{ij}(P \cap U_{ij}) = \pi_{/ij}(W_{ij}) \times W^{3}
\]

\begin{lemma}
\label{lemma: dims}
$W$ and $W_{ij}$ are semialgebraic sets in $ \mathbb{R}^{4n}$, with $\dim(W) \leq 3n-3$ and  $\dim(W_{ij}) \leq 3n-5$.
\end{lemma}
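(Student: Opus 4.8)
# Proof Plan for Lemma \ref{lemma: dims}

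The plan is to establish semialgebraicity first and then the two dimension bounds, deducing the bound on $W_{ij}$ from that on $W$ by exploiting a torus symmetry. For semialgebraicity, write $u_1 = x+iy$ and $u_2 = z+iw$ with $x,y,z,w\in\mathbb{R}^n$. The conditions cutting out $W$ — namely $\|u_1\|_2^2=\|u_2\|_2^2=1$, the two real equations $\operatorname{Re}\langle u_1,u_2\rangle=\operatorname{Im}\langle u_1,u_2\rangle=0$, and $|u_{1i}|^2=|u_{2i}|^2$ for $i=1,\dots,n$ — are polynomial in $(x,y,z,w)$, so $W$ is a real algebraic subset of $\mathbb{R}^{4n}$ and in particular semialgebraic; and $W_{ij}$ is cut out of $W$ by the additional polynomial equalities $\operatorname{im}(u_{1i})=\operatorname{im}(u_{2j})=0$ together with the open conditions $u_{1i}\neq 0$, $u_{2j}\neq 0$, hence is semialgebraic as well. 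Throughout, $\dim$ denotes dimension of a semialgebraic set, and I will use the standard facts that it does not increase under semialgebraic maps, is invariant under semialgebraic homeomorphism, and is additive on semialgebraic products.

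For $\dim W\le 3n-3$ I would induct on $n$. The base case $n=1$ is vacuous, since $|u_1|=|u_2|=1$ and $\langle u_1,u_2\rangle=0$ have no common solution, so $W=\emptyset$. For the inductive step, write $W=W^{\circ}\cup W^{\mathrm{deg}}$ where $W^{\circ}$ is the open locus on which every coordinate $u_{1i}$ is nonzero. On $W^{\mathrm{deg}}$ some $u_{1i}$ vanishes, and $|u_{1i}|=|u_{2i}|$ forces $u_{2i}=0$ too; deleting the $i$-th coordinate identifies each such locus semialgebraically with the analogue of $W$ in dimension $n-1$, so $\dim W^{\mathrm{deg}}\le 3(n-1)-3<3n-3$ by the inductive hypothesis.

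It remains to bound $W^{\circ}$. There I can write $u_{2i}=\zeta_i u_{1i}$ with $|\zeta_i|=1$ uniquely determined, which exhibits $W^{\circ}$ as the image of the semialgebraic map $(u_1,\zeta)\mapsto(u_1,(\zeta_i u_{1i})_i)$ defined on
\[
D=\Bigl\{(u_1,\zeta): \|u_1\|_2=1,\ u_{1i}\neq 0,\ |\zeta_i|=1\ (1\le i\le n),\ \textstyle\sum_{i}|u_{1i}|^2\zeta_i=0\Bigr\},
\]
a subset of the $(3n-1)$-dimensional manifold $M=\{(u_1,\zeta):\|u_1\|_2=1,\ u_{1i}\neq 0,\ |\zeta_i|=1\}$ cut out by the two real equations $\sum_i|u_{1i}|^2\zeta_i=0$. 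The crux is that $g\colon M\to\mathbb{R}^2$, $g(u_1,\zeta)=\sum_i|u_{1i}|^2\zeta_i$, is a submersion away from the locus $E$ on which all the $\zeta_i$ are real multiples of one another (where its differential in the $\zeta$-directions has rank $\le 1$): on $M\setminus E$ the fiber $D\setminus E$ is a submanifold of codimension $2$, hence of dimension $\le 3n-3$; and $E$ is a finite union of sets parametrized by $u_1$ and one overall phase, on each of which the surviving equation $\sum_i\pm|u_{1i}|^2=0$ is a genuinely nontrivial condition on $u_1$, so $\dim(D\cap E)\le 2n-1\le 3n-3$. Hence $\dim W^{\circ}\le\dim D\le 3n-3$, closing the induction.

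For $\dim W_{ij}\le 3n-5$, note that $\mathbb{S}^1\times\mathbb{S}^1$ acts on $W$ by $(u_1,u_2)\mapsto(e^{i\theta_1}u_1,e^{i\theta_2}u_2)$ — this preserves every defining relation of $W$ — and that this action is free on the open, $\mathbb{S}^1\times\mathbb{S}^1$-stable subset $W'=\{(u_1,u_2)\in W: u_{1i}\neq 0,\ u_{2j}\neq 0\}$. The equivariant map $W'\to\mathbb{S}^1\times\mathbb{S}^1$, $(u_1,u_2)\mapsto(u_{1i}/|u_{1i}|,\,u_{2j}/|u_{2j}|)$, semialgebraically trivializes this action, giving a semialgebraic homeomorphism $W'\cong(\mathbb{S}^1\times\mathbb{S}^1)\times W^{+}_{ij}$ with $W^{+}_{ij}=\{(u_1,u_2)\in W: u_{1i}>0,\ u_{2j}>0\}$; hence $\dim W^{+}_{ij}=\dim W'-2\le\dim W-2\le 3n-5$. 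Since $(u_1,u_2)\mapsto(\pm u_1,\pm u_2)$ are semialgebraic self-homeomorphisms of $W$ carrying $W^{+}_{ij}$ onto the remaining three sign-sectors of $W_{ij}$, the set $W_{ij}$ is a finite union of pieces of dimension $\le 3n-5$, so $\dim W_{ij}\le 3n-5$. The one genuinely delicate point is the dimension drop on $W^{\circ}$: proving that the two orthogonality equations cut the dimension by the full $2$ — which reduces to locating and bounding the degenerate ``aligned-phase'' locus $E$ — is exactly where a naive parameter count loses a unit, and where the small cases need the most care.
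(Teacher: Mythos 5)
Your proposal is correct, and it reaches both bounds by a genuinely different route than the paper. The paper introduces an auxiliary real algebraic set $X\subseteq\mathbb{R}^{4n}$ in variables $(v,w,x,y)$, cut out by $x_i^2+y_i^2=1$, $\|v+iw\|_2^2=1$ and $\langle v+iw,(v+iw)\circ(x+iy)\rangle=0$, proves by an explicit linear-independence computation on the differentials that $X$ is smooth of dimension $3n-3$, and bounds $\dim W$ by pushing $X$ onto $W$ through the semialgebraic surjection $S(v,w,x,y)=(v+iw,(v+iw)\circ(x+iy))$; for $W_{ij}$ it first applies a coordinate-swap bijection onto $W_{ij}'$ (conditions on $u_{1i},u_{1j}$) and then repeats the tangent-space computation on the slice $Y=\{w_1=w_2=0\}\subseteq X$, obtaining $\dim Y=3n-5$. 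You use the same underlying phase parametrization $u_2=\zeta\circ u_1$, but instead of one global smooth model you stratify $W$ by vanishing coordinates, induct on $n$, and on the open stratum run a submersion argument, handling the aligned-phase locus $E$ separately; note that the paper's retention of phase variables $x_i+iy_i$ even where $v_i+iw_i=0$ is exactly what lets it avoid both your stratification and the locus $E$. Conversely, your deduction of $\dim W_{ij}\le 3n-5$ from $\dim W\le 3n-3$ via the free $\mathbb{S}^1\times\mathbb{S}^1$ action and the semialgebraic trivialization $W'\cong(\mathbb{S}^1\times\mathbb{S}^1)\times W^{+}_{ij}$, followed by the four sign sectors, is slicker than the paper's second tangent-space computation plus swap map, and formally derives the second bound from the first. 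The one step of your sketch that should be made precise is $\dim(D\cap E)\le 2n-1$: for a mixed sign pattern $\epsilon$ the equation $\sum_i\epsilon_i|u_{1i}|^2=0$ together with $\|u_1\|_2=1$ forces $\sum_{\epsilon_i=1}|u_{1i}|^2=\sum_{\epsilon_i=-1}|u_{1i}|^2=1/2$, which is a product of two spheres of total dimension $2n-2$ (and the all-plus pattern yields the empty set), so the ``genuinely nontrivial condition'' claim is justified and the induction closes for all $n\ge 2$.
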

By the Nash stratification theorem, Proposition 9.1.8 in \cite{RAG}, any semialgebraic set is a union of Nash submanifolds. Therefore, we can express any $P \cap U_{ij}$ as a union of Nash submanifolds of $U_{ij}$ and therefore $P$ is a union of submanifolds of $Z$. Now, since $B$ is a submanifold of $Z$ and $P \subseteq B$, P is also a union of submanifolds of $B$ (by submanifold we always mean embedded submanifold). The dimension of any of these submanifolds is clearly upper bounded by 
\[
\dim(\pi_{/ij}(W_{ij}) \times W^{3}) \leq 4(3n-3) -2
\]
Since $\pi_{/ij}$ cannot increase algebraic dimension, this completes the proof of theorem \ref{theorem: dimension}, once we prove the lemma \ref{lemma: dims} below.

\subsection{Proof of Lemma \ref{lemma: dims}}
\subsection{An Auxiliary Variety}

Let $I$ denote the ideal 
\[
(f_1,\ldots ,f_n ,g,h_1,h_2)\subset \R[x_1,\ldots, x_n,y_1,\ldots, y_n,v_1,\ldots,v_n, w_1,\ldots, w_n]
\]
where 
\[
f_i = x_i^2 + y_i^2 - 1,\quad g=\sum_{j=1}^n (v_j^2 + w_j^2) -1,\quad h_1 = \sum_{j=1}^n (v_j^2 + w_j^2)x_j,\quad h_2 = \sum_{j=1}^n (v_j^2 + w_j^2)y_j
\]

\begin{lemma} Let $X$ be the algebraic set in $\mathbb{R}^{4n}$ defined by the ideal $I$.  Then $X$ is a smooth scheme-theoretic complete intersection of dimension $3n-3$.
\label{lemma: aux}
\end{lemma}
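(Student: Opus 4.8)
My plan is to verify the two claims -- smoothness and the complete intersection property -- separately, and then read off the dimension. First, observe that $I$ is generated by $n+3$ polynomials in $4n$ variables, so if $X$ is cut out transversally (i.e.\ the Jacobian of $(f_1,\dots,f_n,g,h_1,h_2)$ has full rank $n+3$ at every point of $X$), then automatically $X$ is a smooth scheme-theoretic complete intersection of pure dimension $4n-(n+3)=3n-3$, which is exactly the asserted dimension. So both claims reduce to a single Jacobian rank computation along $X$.

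Next I would set up the Jacobian matrix. Writing the gradients in blocks according to the variable groups $(\vec{x},\vec{y},\vec{v},\vec{w})$: each $f_i$ contributes the row $(2x_i\, e_i,\ 2y_i\, e_i,\ 0,\ 0)$; $g$ contributes $(0,\ 0,\ 2\vec{v},\ 2\vec{w})$; $h_1$ contributes $(\,(v_j^2+w_j^2)_j,\ 0,\ 2\vec{v}\,(\!\sum x_j\cdot),\ 2\vec{w}\,(\!\cdots)\,)$ -- more precisely $\partial h_1/\partial x_j = v_j^2+w_j^2$, $\partial h_1/\partial v_j = 2v_j x_j$, $\partial h_1/\partial w_j = 2w_j x_j$, and $\partial h_1/\partial y_j = 0$; and symmetrically for $h_2$ with the roles of $\vec{x}$ and $\vec{y}$ swapped. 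On $X$ we have the constraints: $x_i^2+y_i^2=1$ for all $i$ (so $(x_i,y_i)\neq(0,0)$), $\sum_j(v_j^2+w_j^2)=1$ (so $(\vec{v},\vec{w})\neq 0$, and $v_j^2+w_j^2$ is a probability vector on the index set), and the two ``centering'' conditions $\sum_j (v_j^2+w_j^2)x_j = \sum_j (v_j^2+w_j^2)y_j = 0$.

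To prove full rank, suppose a linear combination of the rows vanishes: $\sum_i a_i \nabla f_i + b\,\nabla g + c_1 \nabla h_1 + c_2 \nabla h_2 = 0$. Reading the $y_j$-block gives $2a_j y_j + 2 c_2 (v_j^2+w_j^2) x_j \cdot 0\,$... — more carefully, the $y_j$ component yields $2 a_j y_j + c_2(v_j^2+w_j^2)=0$ and the $x_j$ component yields $2a_j x_j + c_1(v_j^2+w_j^2)=0$; eliminating $a_j$ (using $x_j^2+y_j^2=1$) gives $a_j = -\tfrac12(c_1 x_j + c_2 y_j)(v_j^2+w_j^2)$ and the orthogonality relation $(c_1 x_j + c_2 y_j) y_j = (c_1 x_j + c_2 y_j) x_j \cdot(\text{sym})$; then the $v_j$- and $w_j$-blocks give $2b v_j + 2c_1 v_j x_j + 2 c_2 v_j y_j = 0$, i.e.\ $v_j(b + c_1 x_j + c_2 y_j)=0$ and likewise $w_j(b+c_1 x_j + c_2 y_j)=0$. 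Hence for every $j$ in the support $S=\{j: v_j^2+w_j^2>0\}$ we get $b + c_1 x_j + c_2 y_j = 0$. Multiplying by $v_j^2+w_j^2$ and summing over $j$, and using $\sum(v_j^2+w_j^2)=1$ together with the two centering conditions $\sum(v_j^2+w_j^2)x_j=\sum(v_j^2+w_j^2)y_j=0$, forces $b=0$, and then $c_1 x_j + c_2 y_j = 0$ on $S$. I then need a genericity/nondegeneracy argument that the support $S$ contains two indices $j,k$ with $(x_j,y_j)$ and $(x_k,y_k)$ linearly independent as vectors in $\mathbb{R}^2$ — if so, $c_1=c_2=0$, hence all $a_j=0$, and the rank is full. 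The one configuration to rule out is that all $(x_j,y_j)$ for $j\in S$ are parallel to a common unit vector $u\in\mathbb{R}^2$; but then $\sum_S(v_j^2+w_j^2)(x_j,y_j)$ is a nonzero multiple of $u$ (signs could cancel), so I must check that the centering conditions cannot be satisfied in that degenerate case except on a locus already excluded — this case analysis is the main obstacle, and I expect it needs the observation that if all $(x_j,y_j)_{j\in S}$ lie on the line $\mathbb{R}u$ then $x_j=\varepsilon_j u_1,\ y_j=\varepsilon_j u_2$ with $\varepsilon_j=\pm1$ (since $x_j^2+y_j^2=1$), so $\sum_S(v_j^2+w_j^2)x_j = u_1\sum_S \varepsilon_j(v_j^2+w_j^2)$ and similarly for $y$; the centering conditions then read $u_1 t = u_2 t = 0$ with $t=\sum_S\varepsilon_j(v_j^2+w_j^2)$, forcing $t=0$, which is a nontrivial sign cancellation but does occur, so in that sub-case I instead argue directly that $c_1,c_2$ are still killed because the equations $c_1 x_j + c_2 y_j=0$ on $S$ plus $t=0$ still leave $(c_1,c_2)$ only constrained to be orthogonal to $u$; to finish I would then show such points do not actually lie on $X$ or handle them by noting the remaining rows ($\nabla f_j$ for $j\notin S$, which are genuinely present since $n\ge$ something, plus $\nabla g$) supply the missing rank. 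Once the Jacobian has full rank everywhere on $X$, the scheme-theoretic complete intersection and smoothness and the dimension count $3n-3$ all follow at once.
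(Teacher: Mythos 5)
Your overall strategy is the same as the paper's: show that the Jacobian of $(f_1,\dots,f_n,g,h_1,h_2)$ has full rank $n+3$ at every point of $X$, from which smoothness, the scheme-theoretic complete intersection property, and $\dim X=4n-(n+3)=3n-3$ all follow at once (the paper only adds the cosmetic step of using the $(\mathbb{S}^1)^n$-action rotating each pair $(v_j,w_j)$ to reduce to points with $w=0$ before doing the identical computation). However, there is a genuine gap at the decisive step, namely the elimination of the multipliers $c_1,c_2$ of $\nabla h_1,\nabla h_2$. Your route requires two indices $j$ in the support $S=\{j: v_j^2+w_j^2>0\}$ with $(x_j,y_j)$ linearly independent, and, as you yourself note, this can fail; but the degenerate configurations genuinely lie on $X$ (for example $(x_1,y_1)=(1,0)$, $(x_2,y_2)=(-1,0)$, $v_1^2+w_1^2=v_2^2+w_2^2=\tfrac12$, all other weights zero, satisfies $h_1=h_2=0$), so your first fallback (``such points do not actually lie on $X$'') is false, and the second (``the remaining rows supply the missing rank'') cannot work as stated: the rows $\nabla f_j$ for $j\notin S$ involve only the $a_j$'s and $\nabla g$ involves only $b$, which you have already shown to vanish, so none of them constrains $(c_1,c_2)$.

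The missing idea — and the paper's trick — is that a \emph{single} index $j\in S$ suffices. Once $b=0$ and $c_1x_j+c_2y_j=0$ for some $j\in S$, take the combination $c_1\cdot(x_j\text{-block equation})+c_2\cdot(y_j\text{-block equation})$:
\[
0=c_1\bigl(2a_jx_j+c_1(v_j^2+w_j^2)\bigr)+c_2\bigl(2a_jy_j+c_2(v_j^2+w_j^2)\bigr)
=2a_j\,(c_1x_j+c_2y_j)+(c_1^2+c_2^2)(v_j^2+w_j^2),
\]
and since the first term vanishes and $v_j^2+w_j^2>0$, we get $c_1=c_2=0$; then $2a_ix_i=2a_iy_i=0$ together with $x_i^2+y_i^2=1$ forces $a_i=0$ for all $i$, so the rank is full with no case analysis at all. (Even within your degenerate case the conclusion could be salvaged: the $x_j$- and $y_j$-block equations say $(c_1,c_2)$ is parallel to $(x_j,y_j)$, hence to the common direction $u$, while your orthogonality relation gives $(c_1,c_2)\perp u$, so $(c_1,c_2)=0$ — but the identity above makes this unnecessary.) As written, though, your proof leaves the crucial case unresolved and the proposed repairs do not hold, so the argument is incomplete without this step.
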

\begin{proof}
There is a smooth action of $(\mathbb{S}^1)^n$ on $\mathbb{R}^{4n}$, defined by 
\[
(\mu , (v,w,x,y) ) \in (\mathbb{S}^1)^n \times \mathbb{R}^{4n} \mapsto (\mu \circ (v+iw),x+iy)
\] 
Since for each $\mu$, this map is a diffeomorphism, the Zariski tangent spaces of $X$ will be isomorphic along orbits of this action and thus we need only consider a representative of each orbit. In particular, we consider points where $w_i = 0$ for all $i$.

The tangent space at a point $p = (\tilde{v}_1, \ldots, \tilde{v}_n, \tilde{w}_1 = 0, \ldots, \tilde{w}_n = 0, \tilde{x}_1, \ldots, \tilde{x}_n, \tilde{y}_1, \ldots, \tilde{y}_n) \in X$ is the orthogonal complement of the subspace spanned by the following differentials:
\begin{align*}
& F_i = \tilde{x}_i \frac{\partial}{\partial x_i} + \tilde{y}_i \dy, \quad i = 1,2,\ldots, n\\
& G = \sum_{i=1}^n 2\tilde{v}_i \dv\\
& H_1 = \sum_{i=1}^n 2\tilde{v}_i\tilde{x}_i \dv + \tilde{v}_i^2 \dx\\
& H_2 = \sum_{i=1}^n 2\tilde{v}_i\tilde{y}_i \dv + \tilde{v}_i^2 \dy
\end{align*}
We will show that these differentials are linearly independent at every point $p \in X$, thereby establishing that $\dim(T_p (X)) = 4n- (3n+3) = 3n-3$. 

Suppose $(\sum_{i=1}^n a_i F_i) + bG + cH_1 + dH_2 = 0$ for some $(a_1,\ldots,a_n,b,c,d) \in \R^{n+3}$.  Then, collecting terms we get the following:
\begin{align*}
& (\dx) \quad \alpha_i := a_i\tilde{x}_i + c\tilde{v}_i^2 = 0\\
& (\dy) \quad \beta_i := a_i\tilde{y}_i + d\tilde{v}_i^2 = 0\\
& (\dv) \quad \gamma_i := \tilde{v}_i(b+c\tilde{x}_i + d\tilde{y}_i) = 0
\end{align*}

Define 
\[
\tilde{f}_1 = f_1(p), \ldots, \tilde{f}_n = f_n(p),\quad \tilde{g} = g(p), \quad \tilde{h}_1 = h_1(p),\quad \tilde{h}_2 = h_2(p)
\]
Since $p \in X$, $\tilde{f}_i = \tilde{g} = \tilde{h}_1 = \tilde{h}_2 =  0$ and in particular,
\begin{align*}
0 & = b\tilde{g} + c\tilde{h}_1 + d\tilde{h}_2 \\
& = b(\sum_{i=1}^n \tilde{v}_i^2 -1) + c\sum_{i=1}^n \tilde{x}_i \tilde{v}_i^2 + d\sum_{i=1}^n \tilde{y}_i \tilde{v}_i^2 \\
& = \left(\sum_i \tilde{v}_i (\tilde{v_i} (b+c\tilde{x}_i + d\tilde{y}_i))\right) - b \\
& = \left(\sum_i \tilde{v}_i (\gamma_i)\right) - b = -b
\end{align*}
Thus $b=0$.

Note that since $\tilde{g} = 0$, not all $\tilde{v}_i = 0$.  Say, without loss of generality, that $\tilde{v}_1 \neq 0$.  Then since $\gamma_1 = 0$,
$$b+c\tilde{x}_1 + d\tilde{y}_1 = c\tilde{x}_1 + d\tilde{y}_1 = 0$$

Continuing,
\begin{align*}
0 & =c\alpha_1 + d\beta_1 \\
& = c(a_1\tilde{x}_1 + c \tilde{v}_1^2) + d(a_1\tilde{y}_1 + d \tilde{v}_1^2)  \\
& =  a_1 (c\tilde{x}_1 + d\tilde{y}_1) + (c^2 + d^2)\tilde{v}_1^2 \\
& = (c^2 + d^2)\tilde{v}_1^2
\end{align*}
Since $\tilde{v}_1^2 \neq 0$, $c^2 + d^2 = 0$ and hence $c=d=0$ as $c,d \in \R$.

But then, $\tilde{x}_i \alpha_i + \tilde{y}_i \beta_i = a_i (\tilde{x}_i^2 + \tilde{y}_i^2) = 0$.  As $\tilde{f}_i = 0$, we have $a_i (\tilde{x}_i^2 + \tilde{y}_i^2) = a_i$ and thus $a_i = 0$ for all $i$.  We've thereby shown that $(a_i,b,c,d) = 0$ and hence $X$ is smooth of dimension $\dim T_p X = 4n-(n+3) = 3n-3$.
\end{proof}

Note that by pairing real coordinates into complex ones, $X$ may be written set-theoretically as:
\[
X = \{ (v+iw,x+iy) \in \R^{4n}; (x+iy)\circ(x-iy) = 1,\left<v+iw,v+iw\right>=1,\left<v+iw,(v+iw) \circ (x+iy)\right> = 0\}
\]
where $\circ$ denotes the Hadamard product.

For the following two corollaries, we will need the polynomial map:
\[
S: (v,w,x,y) \in \mathbb{R}^{4n} \mapsto (v+iw, (v+iw)\circ(x+iy)) \in \mathbb{R}^{4n}
\]
The equations defining $X$ in complex coordinates say that $S$ surjects $X$ onto $W$. Since $S$ is a semialgebraic map, we have $\dim(W) \leq \dim(X) = 3n-3$.  We've shown:

\begin{corollary} $\dim W \leq 3n-3$
\label{corollary: dim1}
\end{corollary}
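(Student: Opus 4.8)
The plan is to use the polynomial map $S$ already introduced above and show that it carries $X$ (really its real locus) \emph{onto} $W$; then, since a semialgebraic map cannot increase semialgebraic dimension, $\dim W \le \dim X = 3n-3$ by Lemma~\ref{lemma: aux}.

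First I would check $S(X)\subseteq W$. Writing a point of $X$ in complex form as $a = v+iw$ and $c = x+iy$, the relations $f_i = 0$ read $|c_i|^2 = 1$; the relation $g = 0$ reads $\|a\|_2 = 1$; and $h_1 + i h_2 = \sum_j (v_j^2+w_j^2)(x_j+iy_j) = \sum_j \bar a_j (a_j c_j) = \langle a,\, a\circ c\rangle$, so $h_1 = h_2 = 0$ says exactly $\langle a, a\circ c\rangle = 0$. Hence $(a,b) := S(v,w,x,y) = (a,\, a\circ c)$ satisfies $\|a\|_2 = 1$, $|b_j| = |a_j|\,|c_j| = |a_j|$, $\|b\|_2^2 = \sum_j |a_j|^2 = 1$, and $\langle a,b\rangle = 0$: this is precisely membership in $W$.

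Next I would prove $W \subseteq S(X)$. Given $(u_1,u_2)\in W$, set $a := u_1$ (that is, $v = \mathrm{Re}\,u_1$, $w = \mathrm{Im}\,u_1$) and define the componentwise phase $c$ by $c_i := u_{2i}/u_{1i}$ wherever $u_{1i}\neq 0$ — which has $|c_i| = 1$ precisely because $|u_{1i}| = |u_{2i}|$ — and by $c_i := 1$ wherever $u_{1i} = 0$, noting that there $u_{2i} = 0$ as well so the value of $c_i$ is irrelevant. With $x = \mathrm{Re}\,c$, $y = \mathrm{Im}\,c$ one checks $f_i = 0$ (since $|c_i| = 1$), $g = 0$ (since $\|u_1\|_2 = 1$), and $h_1 + i h_2 = \langle u_1,\, a\circ c\rangle = \langle u_1, u_2\rangle = 0$, the last step because $a_i c_i = u_{2i}$ on the support of $u_1$ and both vanish off it. Hence $(v,w,x,y)\in X$ and $S(v,w,x,y) = (u_1,u_2)$, so $S(X) = W$.

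Finally, $S$ is given coordinatewise by polynomials (linear in the first block, quadratic in the second), hence is semialgebraic; $X$ is a semialgebraic subset of $\mathbb{R}^{4n}$ whose real locus is a smooth manifold of dimension $3n-3$ by the Zariski tangent-space computation in Lemma~\ref{lemma: aux}; and the image of a semialgebraic set under a semialgebraic map is semialgebraic of dimension at most that of the source (\cite{RAG}). Therefore $\dim W = \dim S(X) \le 3n-3$, as claimed. I expect the proof to be routine; the only points requiring a little care are the bookkeeping at coordinates where $u_{1i}$ vanishes in the surjectivity step, and making explicit that Lemma~\ref{lemma: aux} indeed controls the dimension of the \emph{real} locus of $X$ (it computes the tangent space at an arbitrary real point), which is what feeds the semialgebraic dimension estimate.
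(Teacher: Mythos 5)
Your proof is correct and follows essentially the same route as the paper: the paper likewise observes that the complex form of the defining equations of $X$ shows $S$ surjects $X$ onto $W$, and then invokes that a semialgebraic map cannot increase dimension together with $\dim X = 3n-3$ from Lemma~\ref{lemma: aux}. You have merely spelled out the surjectivity verification (including the coordinates where $u_{1i}=0$) that the paper leaves implicit.
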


We will have thus completed the proof of Lemma $\ref{lemma: dims}$ once we show:

\begin{corollary} $\dim W_{ij} \leq 3n-5$
\label{corollary: dim2}
\end{corollary}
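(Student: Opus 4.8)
The plan is to realize $W_{ij}$ as the image under the semialgebraic map $S$ of an explicitly described subvariety of the auxiliary variety $X$ of Lemma \ref{lemma: aux}, and then to bound the dimension of that subvariety by a tangent-space computation modeled on the proof of Lemma \ref{lemma: aux}.

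First I would rewrite the two extra constraints that cut $W_{ij}$ out of $W$ in the coordinates $(v,w,x,y)$. Writing $S(v,w,x,y)=(u_1,u_2)$ with $u_1=v+iw$ and $u_{2k}=(v_k+iw_k)(x_k+iy_k)$, one has $\mathrm{Im}(u_{1i})=w_i$ and $\mathrm{Im}(u_{2j})=v_jy_j+w_jx_j$, while on $X$ (where $x_j^2+y_j^2=1$) the conditions $u_{1i}\neq 0$ and $u_{2j}\neq 0$ read $(v_i,w_i)\neq(0,0)$ and $(v_j,w_j)\neq(0,0)$. Since the defining equations of $X$ say precisely that $S$ maps $X$ onto $W$ (as observed just before Corollary \ref{corollary: dim1}), the set identity $S(X)\cap C=S\big(X\cap S^{-1}(C)\big)$ gives
\[
W_{ij}=S(Y'),\qquad Y':=X\cap\{w_i=0\}\cap\{v_jy_j+w_jx_j=0\}\cap\{(v_i,w_i)\neq 0\}\cap\{(v_j,w_j)\neq 0\}.
\]
As $S$ is semialgebraic and $Y'$ is semialgebraic, $\dim W_{ij}\le\dim Y'$, so it suffices to show $\dim Y'\le 3n-5$.

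To bound $\dim Y'$, observe that $Y'$ is stable under the residual torus action of $(\mathbb{S}^1)^{n-2}$ on the coordinates indexed by $k\notin\{i,j\}$; the full $(\mathbb{S}^1)^{n}$ of Lemma \ref{lemma: aux} cannot be used because the equation $v_jy_j+w_jx_j=0$ is not torus-invariant at coordinate $j$. Using this action to rotate away the imaginary parts of the coordinates $k\notin\{i,j\}$, every point of $Y'$ lies in the orbit of a representative $p$ with $\tilde w_k=0$ for all $k\neq j$, at which $\tilde v_i\neq 0$ since coordinate $i$ is untouched. At such $p$ I would compute the Zariski tangent space of $Y'$ by adjoining the differentials $dw_i$ and $d(v_jy_j+w_jx_j)$ to those of $f_1,\dots,f_n,g,h_1,h_2$ and showing the resulting $n+5$ covectors are linearly independent: in a vanishing linear combination with coefficients $a_1,\dots,a_n,b,c,d,e,m$, the coefficient of $dw_i$ forces $e=0$ at once; then, running the argument of Lemma \ref{lemma: aux}, at coordinate $j$ one adds $\tilde v_j$ times the $dv_j$-relation to $\tilde w_j$ times the $dw_j$-relation to produce the term $m(\tilde v_j\tilde y_j+\tilde w_j\tilde x_j)$, which vanishes because $p\in Y'$, so that $0=b\tilde g+c\tilde h_1+d\tilde h_2$ again collapses to $-b$ and yields $b=0$; using $\tilde v_i\neq 0$ together with the relations coming from $f_i$ forces $c=d=0$; the $dv_j$- and $dw_j$-relations then give $m(\tilde y_j,\tilde x_j)=0$, hence $m=0$ since $\tilde x_j^2+\tilde y_j^2=1$; and finally the relations coming from the $f_k$ give $a_k=0$ for every $k$. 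Thus $Y'$ is a smooth submanifold of $\mathbb{R}^{4n}$ of dimension $4n-(n+5)=3n-5$ near $p$, hence near every point by torus-equivariance, so $\dim Y'\le 3n-5$ (with $Y'=\emptyset$ trivial). Together with Corollary \ref{corollary: dim1} and the discussion preceding it, this finishes the proof of Lemma \ref{lemma: dims} and of Theorem \ref{theorem: dimension}.

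The step I expect to be the main obstacle is the one flagged above: because $\mathrm{Im}(u_{2j})=0$ becomes the non-torus-invariant equation $v_jy_j+w_jx_j=0$, the convenient normalization "$w_k=0$ for all $k$" of Lemma \ref{lemma: aux} is unavailable at coordinate $j$, so the linear-independence computation must be carried out with a possibly nonzero $\tilde w_j$; keeping the "$b=0$" step alive then requires feeding the defining equation $v_jy_j+w_jx_j=0$ back into the linear algebra at exactly the right moment. The remaining bookkeeping is a routine variant of the proof of Lemma \ref{lemma: aux}, together with the standard fact that semialgebraic maps do not increase dimension.
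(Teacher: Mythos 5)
Your proof is correct, and it takes a mildly but genuinely different route at the one step where the paper and you diverge: the treatment of the constraint $\mathrm{Im}(u_{2j})=0$. The paper first applies the coordinate swap $\phi_j$ (sending $(u_{1j},u_{2j})\mapsto(u_{2j},\bar u_{1j})$, which is a $W$-preserving bijection on $W_{ij}$ precisely because $u_{2j}$ is real there) so that both constraints become conditions on $u_1$ alone; pulled back through $S$ these are just $w_1=w_2=0$ (after taking $i=1,j=2$), and the tangent-space computation for $Y=X\cap\{w_1=w_2=0\}$ is then a trivial augmentation of Lemma \ref{lemma: aux} by $\frac{\partial}{\partial w_1},\frac{\partial}{\partial w_2}$, with no need for the open conditions $u_{1i}\neq 0$, $u_{2j}\neq 0$. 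You instead keep the constraint in its pulled-back form $v_jy_j+w_jx_j=0$, work with $Y'=X\cap\{w_i=0\}\cap\{v_jy_j+w_jx_j=0\}$ intersected with the open conditions, and pay for it with a longer linear-independence computation: the residual torus only normalizes $\tilde w_k=0$ for $k\notin\{i,j\}$, the combination $\tilde v_j\,(dv_j\text{-relation})+\tilde w_j\,(dw_j\text{-relation})$ must be used to recover $b=0$ via the defining equation at $j$, and the inequality $(v_i,w_i)\neq 0$ (hence $\tilde v_i\neq 0$) is genuinely needed to get $c=d=0$, whereas the paper's Lemma \ref{lemma: aux} could take an arbitrary nonvanishing $\tilde v_k$. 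I checked your coefficient bookkeeping (the $n+5$ differentials, the collapse to $-b$, the $m\tilde y_j=m\tilde x_j=0$ step, and $a_k=0$) and it goes through, giving $\dim Y'\le 4n-(n+5)=3n-5$ and hence $\dim W_{ij}\le 3n-5$ since $W_{ij}=S(Y')$ and semialgebraic maps do not increase dimension. In short: the paper's swap buys a near-effortless tangent computation at the cost of verifying the swap preserves $W$ (which silently uses the reality of $u_{2j}$), while your direct approach avoids introducing $\phi_j$ altogether at the cost of a more delicate, but still correct, rank computation.
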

\begin{proof}
Consider the linear map $\phi_j$ defined on $\mathbb{R}^{4n}$ as
\begin{align*}
(u_{11} &,\ldots,u_{1n}, u_{21},\ldots,u_{2n})  \\
 & \mapsto (u_{11},\ldots,u_{1(j-1)},u_{2j},u_{1(j+1)},\ldots,u_{1n},u_{21},\ldots,u_{2(j-1)},\bar{u}_{1j},u_{2(j+1)},\ldots,u_{2n})
\end{align*}
i.e. the identity on all components of $(u_1,u_2) \in \mathbb{C}^{2n}$ except the $j$-th ones where it sends 
\[
(re(u_{1j}),im(u_{1j}),re(u_{2j}),im(u_{2j})) \mapsto (re(u_{2j}),im(u_{2j}),re(u_{1j}),-im(u_{1j}))
\]
This map is semialgebraic and it is easy to verify that it is a bijection between $W_{ij}$ and
\[
W_{ij}' := \left\{(u_1,u_2) \in W;u_{1i}\neq 0,u_{1j}\neq 0,im(u_{1i})=im(u_{1j})=0\right\}.
\]
Therefore, it is enough to upper bound the dimension of $W_{ij}'$. Again, wlog, we take $i=1, j=2$.

Consider the subvariety
\begin{align*}
Y &:= \{ (v+iw,x+iy) \in X; w_1=w_2=0\} \\
   &= \{ (v+iw,x+iy) \in \R^{4n}; w_1 = w_2 = 0,|x+iy| = 1,\left<v+iw,v+iw\right>=1,\left<v+iw,(v+iw) \circ (x+iy)\right> = 0\} \\
   & \subseteq X
\end{align*}
This clearly surjects onto $\left\{(u_1,u_2) \in W;im(u_{1i})=im(u_{1j})=0\right\} \supseteq W_{ij}'$ via the map $S$ defined above.  Thus, it will be enough to show $\dim Y \leq 3n-5.$

$\mathbb{C}^{2n}$ admits an action of $(\mathbb{S}^1)^{n-2}$ on the final $n-2$ components of the first vector, which is just the restriction of the $(\mathbb{S}^1)^n$ action on $\mathbb{C}^{2n}$ to the subgroup $(\mathbb{S}^1)^{n-2} \hookrightarrow (\mathbb{S}^1)^n$ where $(\mu_3,\ldots,\mu_n) \mapsto (1,1,\mu_3,\ldots,\mu_n)$.  Tangent space dimensions are equal along orbits of this action and so, as above, we need only consider representatives where all $w_i = 0$.  At such a point $p = (\tilde{v}_1, \ldots, \tilde{v}_n, \tilde{w}_1 = 0, \ldots, \tilde{w}_n = 0, \tilde{x}_1, \ldots, \tilde{x}_n, \tilde{y}_1, \ldots, \tilde{y}_n) \in X$, the tangent space is the orthogonal complement (in the vector space spanned by $\frac{\partial}{\partial v_i},\frac{\partial}{\partial w_i},\frac{\partial}{\partial x_i},\frac{\partial}{\partial y_i}, i = 1,2\ldots, n)$ of the vectors 

\begin{align*}
&\frac{\partial}{\partial w_1},\quad \frac{\partial}{\partial w_2}\\
& F_i = \tilde{x}_i \frac{\partial}{\partial x_i} + \tilde{y}_i \dy, \quad i = 1,2,\ldots, n\\
& G = \sum_{i=1}^n 2\tilde{v}_i \dv\\
& H_1 = \sum_{i=1}^n 2\tilde{v}_i\tilde{x}_i \dv + \tilde{v}_i^2 \dx\\
& H_2 = \sum_{i=1}^n 2\tilde{v}_i\tilde{y}_i \dv + \tilde{v}_i^2 \dy
\end{align*}

That is,
\begin{align*}
T_p Y & = \R\left<\dw,\dv,\dx,\dy\right>/\R\left<\frac{\partial}{\partial w_1},\frac{\partial}{\partial w_2},F,G,H_1,H_2\right> \\
& \cong \R\left<\frac{\partial}{\partial w_3},\ldots,\frac{\partial}{\partial w_n}\right> \oplus (\R\left<\dv,\dx,\dy\right>/\R\left<F,G,H_1,H_2\right>)
\end{align*}

the last isomorphism since none of $F,G,H_1,H_2$ involve $w$'s.  Note that in the course of proving Lemma $\ref{lemma: aux}$ we actually showed $\dim\R\left<\dv,\dx,\dy\right>/\R\left<F,G,H_1,H_2\right> = 2n-3$ so that the tangent space of $Y$ has dimension $(n-2) + (2n-3) = 3n-5$ everywhere, implying $Y$ is smooth of dimension $3n-5$.
\end{proof}


%
%
%
%
%
%

%
%

%
%
%
%
%
%
%


\newcommand{\bx}{\bm{x}}
\newcommand{\bX}{\bm{X}}
\newcommand{\bH}{\bm{H}}
\newcommand{\bM}{\bm{M}}
\newcommand{\bI}{\bm{I}}
\newcommand{\bU}{\bm{U}}
\newcommand{\bY}{\bm{Y}}
\newcommand{\bZ}{\bm{Z}}
\newcommand{\bb}{\bm{b}}
\newcommand{\by}{\bm{y}}
\newcommand{\bu}{\bm{u}}
\newcommand{\bv}{\bm{v}}
\newcommand{\be}{\bm{e}}
\newcommand{\bz}{\bm{z}}
\newcommand{\bzi}{\bm{z_i}}

\newcommand{\ccS}{\mathcal{S}}

\section{Acknowledgements} We are particularly grateful to Terence Tao for pointing out that the probabilistic method as an approach to establishing minimal information completeness was unlikely to work, to Daniel Cristofaro-Gardiner for some very useful suggestions on Lemma 5.5, and to Harold Williams for early discussions on Wright's conjecture. We also acknowledge Emmanuel Candes, Bernd Sturmfels, Qingchun Ren, and Richard Schoen for discussions. Thanks to Dustin Mixon for maintaining his research blog "Short, Fat Matrices", via which we first became aware that Wright's conjecture is false and of the embedding obstructions that give lower bounds on the required number of measurements, and thanks to Thomas Strohmer for introducing us to Wright's conjecture. VV was supported by the Division of Mathematical Sciences, National Science Foundation, under Grant No. DMS-0913695.

\bibliographystyle{plain}
\bibliography{ucbtest}

\end{document}